\newcommand{\floor}[1]{\left\lfloor #1 \right\rfloor}
\newcommand{\ceil}[1]{\left\lceil #1 \right\rceil}
\newcommand{\defeq}{\mathrel{\mathop:}=}
\newcommand{\fab}[1]{#1}
\newcommand{\orcid}[1]{\href{#1}{\includegraphics[height=.35cm]{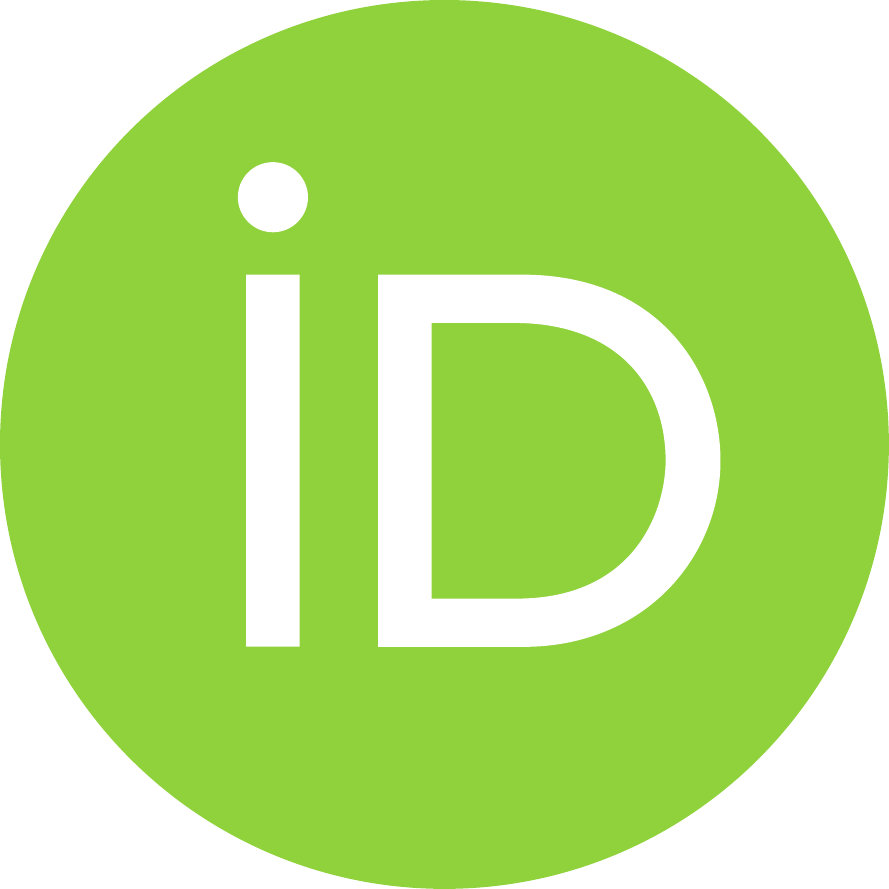}}}
\providecommand{\customgenericname}{}
\newcommand{\newcustomtheorem}[2]{%
  \newenvironment{#1}[1]
  {%
   \renewcommand\customgenericname{#2}%
   \renewcommand\theinnercustomgeneric{##1}%
   \innercustomgeneric
  }
  {\endinnercustomgeneric}
}
\begin{document}
%
%\title{Contribution Title\thanks{Supported by organization x.}}
\title{Computing Phylo-\texorpdfstring{$k$}{k}-mers}
%
%\titlerunning{Abbreviated paper title}
% If the paper title is too long for the running head, you can set
% an abbreviated paper title here
%
% \author{First Author\inst{1}\orcidID{0000-1111-2222-3333} \and
% Second Author\inst{2,3}\orcidID{1111-2222-3333-4444} \and
% Third Author\inst{3}\orcidID{2222--3333-4444-5555}}
% %
% \authorrunning{F. Author et al.}
% First names are abbreviated in the running head.
% If there are more than two authors, 'et al.' is used.
%
% \institute{Princeton University, Princeton NJ 08544, USA \and
% Springer Heidelberg, Tiergartenstr. 17, 69121 Heidelberg, Germany
% \email{lncs@springer.com}\\
% \url{http://www.springer.com/gp/computer-science/lncs} \and
% ABC Institute, Rupert-Karls-University Heidelberg, Heidelberg, Germany\\
% \email{\{abc,lncs\}@uni-heidelberg.de}}

\author{Nikolai Romashchenko\inst{1}\thanks{NR is supported by a fellowship from French Ministery (MNERT).} \and
Benjamin Linard\inst{1,2}\thanks{BL funded by Plan de relance ANR-SPYGEN LS243173.} \and
Fabio Pardi\inst{1} \and
Eric Rivals\inst{1}\thanks{ER thanks funding from European ITN ALPACA project.}
}

\institute{LIRMM, MAB, UMR 5506, Univ. Montpellier, CNRS, France\\
  \orcid{https://orcid.org/0000-0002-2264-5219}\email{nikolai.romashchenko@lirmm.fr},
  \orcid{https://orcid.org/0000-0001-8084-1464}\email{pardi@lirmm.fr},
  \orcid{https://orcid.org/0000-0003-3791-3973}\email{rivals@lirmm.fr}
  \and SPYGEN, 17 Rue du Lac Saint-André, 73370 Le Bourget-du-Lac, France\\
  \orcid{https://orcid.org/0000-0002-5555-898X}\email{Benjamin.Linard@lirmm.fr}
}

\authorrunning{N. Romashchenko et al.}

\maketitle              % typeset the header of the contribution
%
% \begin{abstract}
% The abstract should briefly summarize the contents of the paper in
% 15--250 words.

% \keywords{First keyword  \and Second keyword \and Another keyword.}
% \end{abstract}

%%%%%%%%%%%%%%%%%%%%%%%%%%%%%%%%%%%%%%%%%%%%%%%%%%%%%%%%%%%%%%%%%%%%%%%%%%%%%%%%%%%%%%%%%%%%%%%%%%%%%%%%%%%%%%%%%%%%%%%% 
\begin{abstract}

Phylogenetically informed $k$-mers, or phylo-$k$-mers for short, are $k$-mers that are predicted to appear within a given genomic region at predefined locations of a fixed phylogeny. Given a reference alignment for this genomic region and assuming a phylogenetic model of sequence evolution, we can compute a probability score for any given $k$-mer at any given tree node. 
The $k$-mers with sufficiently high probabilities %are computed, so that later they can be
%\fab{\sout{These probabilities can then be}}
can later be used to perform alignment-free phylogenetic classification of new sequences~--- a procedure recently proposed for the phylogenetic placement of metabarcoding reads and the detection of novel virus recombinants. 
%A large number of $k$-mers %\fab{\sout{~--- potentially $4^k$ for DNA and $20^k$ for protein sequences~---}} 
%needs to be considered at each tree node, warranting the development of efficient enumeration algorithms.
%\comf{The sentence above MAY be unclear: why do they ``need to be considered''? (This is true for the computation of phylo-$k$-mers, not their usage.} 
While computing phylo-$k$-mers, we need to consider large numbers of $k$-mers at each tree node, which warrants the development of efficient enumeration algorithms.

% YES. The sentence above suggests that $k$-mers are candidates to the phylo-$k$-mer status. And I'm happy with this. 
% ``status''...
% Every k-mer wants to be a phylo-k-mer. Deep in its hearth
% That's the title for the next paper!!!

We consider a formal definition of the problem of phylo-$k$-mer computation: 
How to efficiently find all $k$-mers whose probability lies above a user-defined threshold for a given tree node? %\comf{$k$-mers instead of phylo-$k$-mers? NR: I guess}
We describe and analyze algorithms for this problem, relying on branch-and-bound and divide-and-conquer techniques. We exploit the redundancy of adjacent windows of the alignment and the structure of the probability matrix to save on computation. Besides computational complexity analyses, we provide an empirical evaluation of the relative performance of their implementations on real-world and simulated data.
The divide-and-conquer algorithms, which to the best of our knowledge are novel, are found to be clear improvements over the branch-and-bound approach, especially when a large number of phylo-$k$-mers are found.

\keywords{phylo-$k$-mers,
  algorithms,
  enumeration,
  phylogenetics,
  metabarcoding,
  NGS,
  evolution}
\end{abstract}

%%%%%%%%%%%%%%%%%%%%%%%%%%%%%%%%%%%%%%%%%%%%%%%%%%%%%%%%%%%%%%%%%%%%%%%%%%%%%%%%%%%%%%%%%%%%%%%%%%%%%%%%%%%%%%%%%%%%%%%% 
\section{Introduction}

%\comnr{I move the text to the new template.  }

%\comnr{
%You should have an email? "Computation of Phylo-k-mers", I think we all should have access on easychair to the submission to change it.
%So if we exclude the word "construction" from the title, we need to exclude it everywhere. I am okay with that since it's actually computation of values, not construction of any data structure. This word appeared in "database construction" which makes sense in the context of the whole thing, but maybe not in the context of separate phylo-k-mers
%}

%High-level motivation here: sequencing cost decreases, while alignment is still difficult, therefore we want to develop alignemnt-free approaches. Middle-level motivation: phylo-k-mers have various applications, we want to construct them fast.

Alignment-free approaches in bioinformatics are motivated by the fact that sequence alignment is a complex task, requiring the use of memory and time-consuming algorithms. Moreover, alignments are potentially inaccurate, sensitive to sequencing errors, and difficult to apply to genomes with permuted structures \cite{zielezinski2017alignment}. %A widespread alignment-free technique relies on the decomposition of a sequence into its constituent $k$-mers, that is, its substrings of length $k$.
Many alignment-free methods for solving various problems in bioinformatics (e.g., \emph{de novo} assembly, genome comparison, read correction, read clustering) rely on the decomposition of a sequence into its constituent $k$-mers, that is, its substrings of length $k$.
%\comer{A technique for what?}
% $k$-mers are used for several unrelated tasks, from sequence assembly [xxx], to bla bla.

% NEW PARAGRAPH
Recently, a probabilistic extension of the notion of $k$-mers was proposed \cite{rappas,sherpas}. In this development, many more $k$-mers are inferred from a set of reference sequences beyond the ones that are actually within those sequences. %with the goal of predicting $k$-mers that may be present %with non-negligible probability 
This inference aims at predicting $k$-mers that may be present in relatives of the reference sequences (e.g., within their ancestors, or within ``cousin'' sequences). Moreover, for any given location in the phylogeny of the reference sequences, one can estimate the probability of observing any given $k$-mer, meaning that probability scores can be assigned to the inferred $k$-mers. 
Key to this inference are probabilistic models of sequence evolution, which rely on a phylogenetic 
tree describing the evolutionary history of the reference sequences. The inferred $k$-mers are intended to be informative about 
the phylogenetic origin of newly-observed sequences containing them. For these reasons they are called \emph{phylo-$k$-mers}.
%\comer{Contradictory statements: "of the sequences containing them" and "beyond the ones that are actually within those sequences,".
%I think we need to say those nodes represent ancestors in the sense of evolution. For these ancestors, one misses observed sequences, but one wishes to infer them. Here, we want to infer their phylo-k-mers.}
%\comfab{Comment addressed. NR to delete this, if happy.}

Every phylo-$k$-mer $w$ is associated with scores describing how probable $w$ is to appear at a predefined set of nodes in the reference phylogeny (more detail in the \nameref{sec:prel}). %Preliminaries).
These scores can be used to determine the likely phylogenetic origin of any given \emph{query} sequence, while avoiding the need to align the query to the reference sequences. This idea was recently applied to %various bioinformatics tasks, from the 
phylogenetic placement of metabarcoding reads \cite{rappas} %to
and the detection and analysis of virus recombinants composed of fragments from different viral types \cite{sherpas}.

The main bottleneck of this technique lies in the very large number of phylo-$k$-mers, which comes from the fact that we need to consider up to $4^k$ $k$-mers for DNA and $20^k$ for protein sequences. Although we can reduce this number by only considering phylo-$k$-mers with probability scores above a certain threshold, practical threshold values are typically low. Thus, finding phylo-$k$-mers remains computationally challenging. While previous works only considered the accuracy and speed of sequence classification based on already computed phylo-$k$-mers \cite{rappas,sherpas}, here we focus on algorithms for computing phylo-$k$-mers. 

% NEW PARAGRAPH
In the following, we consider a number of algorithms for this problem. While one of these algorithms has already been described to some degree in the literature (e.g.,~\cite{rappas,martin2018fast,pizzi_finding_2011,salmela2007algorithms}), the others are novel. We analyze the complexities of all the presented algorithms and compare their running times over simulated and real-world datasets. Both the theoretical analyses and the empirical evaluations show that the new algorithms may be significant improvements over the existing ones, especially when a large number of phylo-$k$-mers must be output.
%for large sizes of the output.}

%We present a new algorithm for this problem, analyze its complexity and compare its running time against the previously known algorithm for the same problem. Our measurements show that the new algorithm improves the running time of the existing one up to a factor of five.

%\comfab{I may want to add a little ``teaser'' sentence/paragraph here starting with ``In this paper / here we...'', because the introduction does need to say that here we present the currently best-performing algorithms for the task of phylo-k-mer computation, we analyse them, and finally investigate their practical preformances.}
%\comer{This may look like  a bit over-stated, in the sense that only our team has proposed such algorithms !}
%\comfab{Do we need a paragraph saying what our contribution is? At the moment this is only left to guess! Or maybe the Abstract is enough?}

%\comer{Added related works paragraph.}
%%%%%%%%%%%%%%%%%%%%%%%%%%%%%% ER related works
% \input{cpk_art_abs}
%\subsubsection{Related works.} 

\subsubsection{Related works}

A problem similar to phylo-$k$-mer computation arises in the context of sequence motifs, precisely of Position-Specific Scoring Matrices (PSSMs), also known as Position Weight Matrices (PWMs) or weighted patterns. PSSMs represent DNA and protein sequence motifs (e.g., transcription factor binding sites) as a matrix of probabilities for each nucleotide, or amino acid, at each position in the motif. An important problem is to find significant matches of such weighted patterns in collections of genome-sized sequences. In existing algorithmic solutions to this problem, one of the preliminary steps is to enumerate all possible motif instances that reach the threshold score for a given PSSM. This step is similar to the problem of phylo-$k$-mer computation, with some important differences that we discuss below. Previous literature showed that the tree of all prefixes of full-length sequences with high-enough score can be explored in a depth-first \cite{salmela2007algorithms, martin2018fast} and breadth-first \cite{pizzi2007fast, pizzi_finding_2011} manner. % (see \cite{} for a survey on this matter).

%\comer{Ref for Beckstette 2006 is missing!
%\comnr{Beckstette does not enumerate words. There is the lookahead technique there, but we reference another paper that is earlier than 2006 (later when we discuss branch-and-bound.). I cited Beckstette in Section 3.1 Branch-and-bound}
%}

%\comer{I think this is a minimal  description. For example, the LookAheadTable trick is not mentioned. This is kind of things the reviewers knew and did not find in the previously submitted version. It seems to them that we were disregarding existing work. We need to detail more those related works, not only to explain the differences. Fabio had a similar comments a while ago.
%\comfab{I expanded a little above.  Enough?}
%\comnr{We discuss and reference LookAhead later in the branch-and-bound. In that section, I'll put more references to papers where the trick is used. Here for me it seems off to cite all of them here, because here we are only interested in offline algorithms that do enumeration, and they are all cited here already}
%}

However, in the context of phylo-$k$-mers, the computation is more challenging: 
the PSSM-based approaches only involve a single execution per profile, and
% NR: I don't understand what is ``a single search''. This may be misleading, because people use multi-profile search against the same sequence. Is it important what you wanted to say? COMING
the number of profiles to process is usually in the hundreds \cite{hocomoco_2018, fornes2020jaspar}; on the other hand, computing phylo-$k$-mers may well require processing millions of matrices, as it must process each of the $k$-wide sub-matrices of several input matrices originating %from in
from different parts of the reference phylogeny.
%PSSM search may involve processing \fab{\sout{hundreds or} up to}\comf{I read this as ``hundreds of thousands'' (I did that twice!).} thousands of short profiles, while computing phylo-$k$-mers may well require processing millions of $k$-long matrices \fab{(see \nameref{sec:expe.s})}. 
Another difference is that, for phylo-$k$-mers, score threshold values are typically much lower than for PSSM matching, meaning that a larger fraction of the possible $k$-mers can reach the threshold. Finally, phylo-$k$-mer computation assumes processing matrices related to each other, both because $k$-wide sub-matrices overlap, and because of the phylogenetic relatedness of the input matrices. We exploit the overlap between sub-matrices to improve running time of phylo-$k$-mer computation.

\section{Preliminaries}\label{sec:prel}
\subsection{Notation}
Let $\Sigma$ be a finite ordered alphabet of cardinality $\sigma$.  We consider strings (or sequences) over alphabet $\Sigma$. Let $k$ be a positive integer. Let $\Sigma^{k}$ denote the set of all possible strings of length $k$ over $\Sigma$.
Given a string $s$, the length of $s$ is denoted by $|s|$.  For any two integers $1 \leq i \leq j \leq |s|$, $s_i$ denotes the $i^{th}$ letter of $s$, and  the substring of $s$ starting in position $i$ and ending at position $j$ is denoted by $s_i\ldots s_j$. A substring $s_i\ldots s_j$ is a prefix of $s$ if $i = 1$, and a suffix of $s$ if $j = |s|$. For a set $X$, $|X|$ denotes the number of elements in $X$.

%\subsubsection{Notation for matrices}
We consider matrices whose rows are indexed by symbols of the alphabet $\Sigma$ and whose columns are indexed as the positions of a multiple alignment. A column stores the probability of occurrences of each possible symbol (a state in phylogenetic terms) at that position.  Hence, we term such matrices \emph{probability matrices} since the values of a column sum to one. For a $\sigma \times m$ probability matrix $P$, $P_{\alpha,j}$ denotes the element on row $\alpha$ (with $\alpha \in \Sigma$) and column $j$ of $P$ (with $1 \leq j \leq m$); the same element is denoted by $P_{ij}$ if $\alpha$ is the $i$-th element of $\Sigma$. For two integers $i,j$ such that  $1 \leq i \leq j \leq m$, $P[i:j]$ denotes the matrix $P$ restricted to columns from $i$ to $j$ included. %Similarly, for an array $A$, $A[i:j]$ denotes the array of consecutive elements of $A$ from %$i$th to $j$th
%$i^{th}$ to $j^{th}$ included.
%\comfab{I don't think you ever use $A[i:j]$. In fact you use $M(i:j)$, which has av ery different meaning, and may lead to confusion.}

\subsection{Phylo-\texorpdfstring{$k$}{k}-mers at a glance}\label{sub:pkmers}

Consider a multiple alignment of reference sequences and a phylogenetic tree $T=(V,E)$ describing the evolutionary history leading up to the reference sequences. % \fab{($T$'s leaves)}. 
We add to $T$ a set of nodes $V'$, representing sequences that are unknown relatives of the reference sequences. (See \autoref{fig:extended_tree} in Appendix for an example.)
%\comn{cut}
Let $m$ be the number of columns (sites) in the alignment. For each node $u\in V'$, we compute a $\sigma \times m$ probability matrix $P^u$ describing the probability at $u$ of any state in $\Sigma$, at any site in the alignment, conditional to the sequences observed at the leaves of $T$ (i.e., the aligned reference sequences).
$P^u$ can be derived from the tree likelihood conditional to the states in $\Sigma$ by applying Bayes' theorem, which is standard in phylogenetics (see, e.g., section 4.4.2.1 in \cite{yang2006computational}). Then, the complexity of computing all matrices $P^u$ is equal to that of computing conditional tree likelihoods across all tree nodes, which for a constant-size alphabet can be done in $O(|V\cup V'|\cdot m$) time \cite{bryant2007likelihood} with Felsenstein's algorithm \cite{felsenstein1981evolutionary}.

%We use techniques of ancestral reconstruction to compute $P^u$, which are standard in phylogenetics and will not be discussed here \cite{yang2006computational}.
%\comfab{Fabio will give precise complexities about computing $P^u$, and maybe also argue (elsewhere) that these techniques are not the bottleneck that is preventing us to increase $k$ further.}

%\comer{1. Maybe ancestral is better than "unknown relatives". \\
%2. A figure with an example to illustrate T, V and V', could be useful for the reader.}
%\comfab{1. Aside from ancestral sequences, we have cousin sequences (more important for placement). It would be reductive.
%2. As for the figure, I do think this would be more useful than some of the pseudocodes we include, so a toy example of extended tree would be nice.} 

Given $P^u$, we can then define a probability score $S^u(w)$ associated to any given $k$-mer $w$ and to the node $u$. See \autoref{tpkc}  %Problem Definition~\ref{tpkc}
below for a definition of $S^u(w)$ (where the superscript is dropped for simplicity).  Informally, 
$S^u(w)$ approximates the probability of $w$ to appear in a sequence positioned at node $u$, based on the chosen model of sequence evolution and on the sequences at the leaves of $T$. We call the pair $(w, S^u(w))$ a phylo-$k$-mer.

The interest of phylo-$k$-mers %and their scores
is that finding the nodes $u$ that maximize the product of $S^u(w)$ over all $k$-mers in a query sequence provides a good estimate of its evolutionary origin \cite{rappas,sherpas}. Moreover, this can be computed without aligning the query to the reference sequences, making this approach very scalable to large numbers of queries.
%is that the product of $S^u(w)$ over all $k$-mers in a query sequence provides an approximation of the phylogenetic likelihood of placing that query sequence at node $u$.
%\comn{I really don't know if we should argue here. At the beginning I wanted to cite the thesis here (future work part), but things we discuss there are NOT what the reviewer asked about... Nothing is written anywhere to support the sentence above. However, we could say that practice showed that RAPPAS's accuracy is comparable with state-of-the-art likelihood software (probably because they also cut corners)}
For a detailed treatment of phylo-$k$-mers, see \cite{romashchenko2021computing}.
While the matrix $P^u$ and score function $S^u$ are relative to a particular node $u$, in the following we assume that the node $u$ is fixed, and therefore omit this dependency. We simply write $P$ and $S$.

\subsection{The problem of phylo-\texorpdfstring{$k$}{k}-mer computation}\label{sec:problem-pk}

Here, we study the problem of enumerating $k$-mers and their scores relative to a probability matrix $P$ and a threshold score value $\varepsilon \in [0,1)$. $P$ contains probabilities $P_{\alpha, j}$ of observing different states $\alpha \in \Sigma$ at every site $j$ of the multiple alignment. Starting from an alignment site $j$, or \emph{position} $j$, we can calculate the score of a $k$-mer $w = w_1 w_2 \dots w_{k}$ for this position by taking the product of corresponding probabilities: $S(w, j) = P_{w_1, j} \cdot P_{w_2, j+1} \cdot \hdots \cdot P_{w_{k}, j+k-1}$.
%\draft{(see Fig.~\ref{fig:epkc} for an example).} 
We say that $w$ \emph{obtains the score of $S(w, j)$ at position $j$}. 
Since the number of possible $k$-mers grows exponentially with $k$, it is %\sout{makes it}
challenging to enumerate and store all $k$-mers for $k$ sufficiently large. To overcome this, we only consider $k$-mers that obtain scores greater than $\varepsilon$ for at least one position. For such a $k$-mer $w$, we say that $w$ \emph{reaches the threshold at position} $j$ if $S(w, j) > \varepsilon$. %If $w$ reaches the threshold, we call it \emph{alive}, and we call it \emph{dead} otherwise.
%First, this allows us to significantly reduce computation time. Second, we can avoid storing many improbable (i.e., not reaching the threshold) $k$-mers, assuming their scores to be $\varepsilon$.
The final score $S(w)$ is the maximum of $S(w, j)$ obtained among all positions. %Definition~\ref{tpkc}
\autoref{tpkc} formalizes this problem. 

%\comfab{Minor comment: the concept of score of a $k$-mer at a position $j$ is never used anywhere in the rest of the paper. Instead you may need the score of an $h$-mer for a window of size $h$ (e.g. to describe the output of Algorithms 1 and 2, and strangely not 3.}

%\comfab{Not sure you need to introduce the alive and dead terms here, as they are window dependent. Maybe move the definition next to the definition of $\mathcal{Z}$ (to be added below).
%\comnr{I'd keep it meaning that a k-mer can be generally alive, and alive in the window. I'll keep this comment until $Z$ is well defined.}}

%\comfab{You got rid of any mention of the fact that for dead $k$-mers  $S(w, j)$ is assumed to be $\varepsilon$. This should be OK, because you actually do not do any placement in this paper, but I just want to make sure this is a conscious choice. 
%\comnr{Since we do not report them in the result (explicitly), I assume dead k-mers are not part of the problem (\autoref{tpkc}). Let's ignore them}}

\begin{definition}[Phylo-$k$-mer Computation]\label{tpkc}
  
  \textbf{\\Input:} An integer $k>1$; a $\sigma \times m$ probability matrix  $P$; %$k$, $P$ as in problem EPKC (Def.~\ref{epkc});
  a threshold value $\varepsilon \in [0, 1)$.
  \textbf{\\Output:} %Associative array $A$ storing the pairs $\left\{(w, S(w)) \mid w \in \Sigma^k: S(w) > \varepsilon \right\}$, where
   All pairs $\left\{(w, S(w)) \mid w \in \Sigma^k: S(w) > \varepsilon \right\}$, where
     \[
       S(w)    \defeq \displaystyle\max_{l=1}^{m-k+1} \Big\{ \prod_{j=1}^{k} P_{w_j, l+j-1}\Big\}.  
     \] 
    
\end{definition}

%\comfab{Maybe this is the right place to say that the output does not actually need to contain the $w$ themselves, as this would use a  significant amount of memory.
%\comnr{It kind of does need. We don't store answers in an array. The hashing is not perfect neither => we need to solve collisions in the hash map, therefore, need to store $w$ codes}
%}
% OK I was just thinking the same. Delete this comment

%%%%%%%%%%%%%%%%%%%%%%%%%%%%%%%%%%%%%%%%%%%%%%%%%%%%%%%%%%%%%%%%%%%%%%%%%%%%%%%%%%%%%%%%%%%%%%%%%%%%%%%%%%%%%%%%%%%%%%%% 
\section{Algorithms}\label{sec:algos}
Phylo-$k$-mer computation has been implemented in \textsc{rappas} \cite{rappas} but has not been described explicitly. Here, we describe 
an algorithm similar to the one of \textsc{rappas} and present new algorithms for this problem. All described algorithms approach the problem window-by-window: given a window %$W\fab{=P(j:j+k-1)}$
$W=P[j:j+k-1]$ of $k$ consecutive columns in $P$, we list all
$k$-mers that reach the threshold for the window, as well as their scores. 
Let $\mathcal{Z}$ be the set of such $k$-mers for the window $W$. If $w \in \mathcal{Z}$, we call $w$ \emph{alive} in the window, and we call it \emph{dead} otherwise. Then, we can obtain the solution for the global matrix $P$ by simply taking the union of sets $\mathcal{Z}$ for every window and setting the score of each $k$-mer to the maximum score obtained across all windows. %Solutions for 
%\fab{the the global matrix $P$ can be obtained by simply taking the union of the sets of $k$-mers reaching the threshold at some window, and setting the score of each $k$-mer to the maximum score obtained across all windows.}
%\comfab{I addressed Eric's comment here (check the code). See if it's good.}

%\comnr{Missing definition of $Z$}%\comer{shoudl probably appears in preliminary}
%\comfab{Yes I believe this is the right place to define $Z$: doing it earlier would need $Z$ to have the window or $j$ as subscript, which is heavier. Also: I think dead and alive should be defined together with $Z$, because they are window-dependent concepts. 
%Alternatively you can move the whole paragraph above (incl.~the definition of $Z$) just after the problem definition in the Preliminaries.
%\comnr{I'll drop the supscript later as I did for $P^u$}
%}

In the analysis of the algorithms, we adopt the word-RAM model of computation. It assumes operating on words of size $b$ and performing arithmetic and bitwise operations in constant time \cite{hagerup}. Also, we assume that the alphabet size $\sigma$ is constant. %This assumption implies 
Finally, we assume that any $k$-mer can be represented with a constant number of machine words, implying $b = \Theta(\log \sigma^k)$. 
Those assumptions imply that we can operate on $k$-mers (e.g., writing a $k$-mer to memory) in constant time.

\subsection{Branch-and-bound}\label{sec:bnb}

\textsc{rappas} applied a branch-and-bound-based algorithm. Given a window $W$, the algorithm iterates over possible prefixes in a depth-first manner. For a prefix $p = w_1 \ldots w_l$ with a score $\prod_{j=1}^l W_{w_j, j} > \varepsilon$, it expands $p$ by one symbol and checks whether the score of the expanded prefix also reaches the threshold. As soon as a prefix obtains a score $\le \varepsilon$, such a prefix is rejected. Prefixes of length $k$ with their scores are saved as a result.

This algorithm can be naturally improved with the lookahead bound technique (introduced in \cite{wu2000fast}, also used in \cite{Beckstette_2006, korhonen_moods_2009, martin2018fast}). Consider a lookahead bound array $L$ of elements $L_j = \prod_{h=j+1}^k \max_{a \in \Sigma} {W_{a, h}}$
giving maximum possible scores achieved in $W$ by suffixes of different lengths. Then, a prefix $p = w_1 \ldots w_l$ of length $l$ can be rejected if $\prod_{j=1}^l W_{w_j, j} \le \varepsilon / L_{l}$. 
%\comn{cut content}
By analogy with $k$-mers, we call $p$ \emph{alive} if its score reaches $\varepsilon / L_{l}$, 
and \emph{dead} otherwise. 
Note that a prefix is alive if and only if it is the prefix of an alive $k$-mer, i.e.,~an element of $\mathcal{Z}$.
%\comf{The sentence above is the key of EVERYTHING, it cannot be cut.}
%Alg.~\ref{alg:bb}

\autoref{alg:bb} \fab{in Appendix} gives the pseudocode of the recursive depth-first branch-and-bound algorithm. %Every recursive call considers extending a $(j-1)$-long prefix $p$ of score $s$ by the letter $a_i \in \Sigma$. Instead of operating on strings, we assume operating on their binary representations.
%\comn{cut}
\fab{Similar algorithms} were described for preprocessing \fab{PSSMs} %of PSSM 
in depth-first \cite{salmela2007algorithms, martin2018fast} and breadth-first \cite{pizzi2007fast, pizzi_finding_2011} manners. %In one case \cite{martin2018fast}, 
\fab{In some cases (e.g., \cite{martin2018fast}),} the columns of the PSSM were ordered by conservation to facilitate early rejection of prefixes. This idea can easily be adapted for phylo-$k$-mer computation, by ordering the columns in each window by the entropy of the probability distribution that they define.  However, in practice we did not find this to be worth the computational overhead it involves (see \autoref{fig:bb-sorted} in Appendix).

%\comfab{IF YOU HAVE TIME to find a couple of figures to support the statement above, do include them in Appendix.}

\begin{theorem}\label{th:bb}
%The time complexity of \autoref{alg:bb} is $\mathcal{O}(k \cdot |\mathcal{Z}|)$.
%The phylo-$k$-mer computation problem can be solved in $\mathcal{O}(k \cdot |\mathcal{Z}|)$ time with the depth-first branch-and-bound algorithm. %described above.
Depth-first branch-and-bound runs in $\mathcal{O}(k \cdot |\mathcal{Z}|)$ time for %one $k$-wide window.
one window of $k$ columns. %\comf{$k$-wide is informal talk and I'd rather avoid it in a Theorem statement.}
\end{theorem}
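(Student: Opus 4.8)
\noindent\emph{Proof plan.}
The plan is to charge the running time to the nodes of the \emph{prefix tree} explored by the depth-first traversal, spending only $O(1)$ time per explored node. Recall that this tree (a trie) has as nodes all strings over $\Sigma$ of length between $0$ and $k$, each node of length $l<k$ having $\sigma$ children obtained by appending one symbol. The branch-and-bound procedure of \autoref{alg:bb} recurses into a node only when its parent is alive, so every explored node is either alive itself or a child of an alive prefix, and the recursion never descends below a dead prefix.

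The key step is to count the alive prefixes. By the equivalence recalled just before the statement, a prefix of length $l$ is alive if and only if it is the length-$l$ prefix of some $k$-mer in $\mathcal{Z}$. Since every $k$-mer has exactly one prefix of each length, the map sending $w\in\mathcal{Z}$ to its length-$l$ prefix is a surjection onto the set of alive prefixes of length $l$; hence there are at most $|\mathcal{Z}|$ alive prefixes of each length $l\in\{0,1,\dots,k\}$, and at most $(k+1)\,|\mathcal{Z}|$ alive prefixes in total. Each of them has at most $\sigma$ children, so at most $\sigma(k+1)\,|\mathcal{Z}|$ dead prefixes get explored. Altogether the traversal explores $O(\sigma k\,|\mathcal{Z}|)=O(k\,|\mathcal{Z}|)$ nodes, using that $\sigma$ is constant.

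It then remains to bound the work per explored node. Extending a prefix by one symbol costs one multiplication to update the running score (the parent's score is passed along the recursion) and one comparison against the value $\varepsilon/L_l$ for the relevant length $l$, read from the array $L$ precomputed in $O(k)$ time; outputting a surviving $k$-mer together with its score is $O(1)$ under the word-RAM assumptions, as a $k$-mer occupies a constant number of machine words. Multiplying the $O(k\,|\mathcal{Z}|)$ node count by this $O(1)$ per-node cost, and absorbing the $O(k)$ preprocessing (which dominates only in the degenerate case $\mathcal{Z}=\emptyset$, where the procedure halts at the root), yields the claimed bound.

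The main obstacle is precisely the counting step: one must avoid the naive estimate that would explore up to $\sigma^l$ prefixes at level $l$, and instead use the ``alive $=$ prefix of an alive $k$-mer'' characterization to pin each of the $k+1$ levels down to $O(|\mathcal{Z}|)$ nodes; the constant-alphabet hypothesis is exactly what keeps the extra $\sigma$ factor contributed by the dead children inside the $O(\cdot)$.
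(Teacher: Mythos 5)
Your proof is correct and follows essentially the same route as the paper's: both count the explored nodes of the prefix/call tree level by level, bound the alive prefixes at each level by $|\mathcal{Z}|$ (you via a surjection from $\mathcal{Z}$, the paper via the monotonicity $\xi_A^{j-1}\le\xi_A^{j}$ with $\xi_A^{k}=|\mathcal{Z}|$), charge dead nodes as at most $\sigma$ children of alive ones, and spend $O(1)$ per node plus $O(k)$ preprocessing. The two arguments are interchangeable; no further comment is needed.
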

%\comfab{The statement above could be changed into: \fab{The phylo-$k$-mer computation problem can be solved in $\mathcal{O}(k \cdot |\mathcal{Z}|)$ time, with the depth-first branch-and-bound algorithm described above.}}

%\comnr{This is, obviously, only true for the lookahead B\&B.}

%\begin{proof}
%See \hyperref[th:bb-appendix]{Appendix}.
%\comn{cut content}
%\end{proof}
%\comn{cut}

%\comfab{I like the proof above but it would help to say (maybe when you first introduced the lookahead technique) that \textbf{a prefix is alive if and only if it is the prefix of a $k$-mer in $Z$.} That implies that $\xi_A^{j}$ is the number of prefixes of length $j$ of strings in $Z$, which explains why $\xi_A^{j-1} \le \xi_A^{j}$. (I did not see why it was straightforward when I first read this.)}

%\comfab{Do you really need the word-RAM model for the proof above? You only need to write $k$-mers for the leaves, that is $\xi_A^{k}= |\mathcal{Z}|$ times.  And then even if you had to take $O(k)$ time to deal with each alive $k$-mer, the complexity would stay the same... }

\autoref{th:bb} shows the worst-case complexity of the branch-and-bound to be  $\mathcal{O}(k \cdot |\mathcal{Z}|)$ 
(see \hyperref[th:bb-appendix]{Appendix} for the proof). However, the algorithm 
%While the worst-case complexity of \autoref{alg:bb} is $\mathcal{O}(k \cdot |\mathcal{Z}|)$, it 
achieves optimal best-case complexity: consider $\varepsilon = 0$ and $W$ consisting of strictly positive probabilities, for which $|\mathcal{Z}| = \sigma^k$. The algorithm visits $\sum_{j=0}^k \sigma^j = (\sigma^{k+1}-1)/(\sigma-1) = \Theta(\sigma^k) = \Theta(|\mathcal{Z}|)$ nodes; including preprocessing time, it takes $\Theta(k + |\mathcal{Z}|) = \Theta(|\mathcal{Z}|)$ time in the best case. %\fab{\sout{Additional memory complexity is $\mathcal{O}(k)$ for the call stack.}}\comf{If you analyse memory complexity here, you need to do it everywhere.}
%\comn{cut}
Finally, we note that it is possible to construct examples for which $|\mathcal{Z}|=\Theta(k^c)$ for a small constant $c$, and \fab{branch-and-bound} %\autoref{alg:bb} 
runs in $\Theta(k^{c+1}) = \Theta(k \cdot |\mathcal{Z}|)$, showing that the upper bound in \autoref{th:bb} is tight in these cases.  We present one such example in Appendix. 

%\fab{In the Appendix, we present a family of instances of the problem, for which \autoref{alg:bb} runs in $\Theta(k |\mathcal{Z}|)$, showing that the bound given in \autoref{th:bb} is tight.}\comf{If we're not satisfied with the example in the Appendix, this sentence can go.}

\subsection{Divide-and-conquer}\label{sec:dc}

We present a new algorithm for the problem of phylo-$k$-mer computation. It applies the divide-and-conquer technique to compute scores of prefixes and suffixes for a given window $W$ of size $k$. It also relies on a score bounding technique similar to %lookahead scoring of \autoref{alg:bb}.
the one discussed above. Consider the array $\{ \max_{a \in \Sigma} W_{a, j} : j = 1 \dots k\}$ giving maximum score values for every column. Then, let $M$ be a data structure answering range product queries $M(j_1 : j_2)$ in constant time:

$$M(j_1 : j_2) = \prod_{l = j_1}^{j_2} \max_{a \in \Sigma} W_{a, l}$$

%\comfab{The whole red paragraph below was  commented out in the submission to WABI!!! How did this happen?! I wonder how the reviewers could understand what was going on without this paragraph. No wonder they thought it was a simple algorithm.}
We start with constructing $M$ for $W$, which can be done in time linear in the size of $W$. Then, we split $W$ into two subwindows of sizes $\floor{k/2}$ and $\ceil{k/2}$. We compute $L$, defined as the list of $\floor{k/2}$-mers that reach the score of $\varepsilon_l = \varepsilon/M(\floor{k/2} + 1 : k)$ in the left subwindow. Similarly, we compute $R$, the list of $\ceil{k/2}$-mers that reach the score of $\varepsilon_r = \varepsilon/M(1 : \floor{k/2})$ in the right subwindow. Note that every $\floor{k/2}$-mer in $L$ must be a prefix of at least one alive $k$-mer, and every $\ceil{k/2}$-mer in $R$ is a suffix of an alive $k$-mer. 
The procedure described above is applied recursively to every subwindow until, at the bottom of the recursion, we process a column $j$ and select $1$-mers reaching the score of $\varepsilon / \prod_{l=1, l \neq j}^k \max_{a \in \Sigma} W_{a, l}$.

We combine the results of the recursive calls as follows: if $|L| < |R|$, swap them; 
sort $R$ (the smaller of the two lists) by score. Finally, for every $l \in L$, consider the elements $r\in R$ in descending order of scores; include the sequence obtained by concatenating $l$ and $r$ in the output, until the concatenated sequences are alive. \autoref{alg:dc} gives the pseudocode of this algorithm. %For every recursive call, we consider a subwindow of length $h$ starting at position $j$, for which we aim at listing $h$-mers reaching the score of $\varepsilon'$. \hyperref[dc:bitwise]{Line~\ref{dc:bitwise}} corresponds to concatenating two strings $\floor{h/2}$ and $\ceil{h/2}$-long strings in their binary representations (obtaining either $rl$ or $lr$, depending on whether $L$ and $R$ have been swapped or not).
%\comn{cut}

\begin{algorithm}[t] % NR: I added the parameter because otherwise in LNCS the algorithm now takes A WHOLE single page just alone
%\SetAlgoLined

\caption{Divide-and-conquer}
\label{alg:dc}
\LinesNumbered
\SetFuncSty{textsc}
\SetKwInOut{Input}{Input}\SetKwInOut{Output}{Output}

\Input{A $\sigma \times k$ probability matrix $W$, and a threshold $\varepsilon$}

%A list of pairs $\{(w, S(w)): S(w) > \varepsilon \}$
\Output{%The list of pairs 
$\{(w, s(w)) : s(w) > \varepsilon$\}, %$\{(w, s(w))$ for all $w \in \mathcal{Z} \}$, 
where $s(w)$ denotes the score of $w$ in $W$.  }

%\SetKwFunction{Calculate}{ComputePhyloKmers}
\SetKwFunction{DC}{DC}
\SetKwProg{Fn}{Function}{:}{}

Precompute $M$

\Return \DC{$1$, $k$, $\varepsilon$}

\BlankLine
\textcolor{gray}{\tcc{\fab{The function below lists all the $h$-mers reaching the score of $\varepsilon'$ in a window starting at site $j$}}}
\Fn{\DC{$j$, $h$, $\varepsilon'$}}{
    
    $Z \gets$ empty list; $swapped = false$
    
     \eIf{$h = 1$}{
        \Return $\{(i-1, W_{i, j}): W_{i, j} > \varepsilon'$ for $i \gets 1 \dots \sigma$ $\}$\label{dc:base}
     }{
        %$L \gets$ \DivideAndConquer{$j, \floor{h / 2}$, $\varepsilon' / (M_{j + h - 1} / M_{j + \floor{h / 2}})$}\label{dac:L}
        
        %$L \gets$ \DC{$j, \floor{h / 2}$, $\varepsilon' / M(j + \floor{h / 2} + 1 : j + h)$}\label{dc:L}
        $\varepsilon_l \gets \varepsilon' / $ $M(j + \floor{h / 2} : j + h - 1$); $\varepsilon_r \gets \varepsilon' / $ $M(j : j + \floor{h / 2} - 1)$
        
        $L \gets$ \DC{$j, \floor{h / 2}$, $\varepsilon_l$}\label{dc:L}
        
        %$R \gets$ \DC{$j + \floor{h / 2} + 1$, $h - \floor{h / 2}$, $\varepsilon_r$}\label{dc:R}
        %\comfab{Not sure you need that ``+1'' above.}
        
        $R \gets$ \DC{$j + \floor{h / 2}$, $\ceil{h/2}$, $\varepsilon_r$}\label{dc:R}
        
        \lIf{$|L| < |R|$}{ Swap $L$ and $R$; $swapped = true$}\label{dc:swap}
        
        Sort $R$ by score \label{dac:sort}\label{dc:sort}
    
        \ForEach{$(l, s_l) \in L$}{\label{dc:outer}
            \ForEach{$(r, s_r) \in R$}{\label{dc:inner}
                \lIf{$s_l \cdot s_r \le \varepsilon'$}
                {
                    \textbf{break} \label{dc:break}
                }
                \textcolor{gray}{\tcp{\fab{Concatenate $l$ and $r$ (in their original order):}}}

                $x \gets r \cdot 2^{\ceil{\log_2 \sigma} \floor{h / 2}} + l$ \textbf{if} $swapped$ \textbf{else} $l \cdot 2^{\ceil{\log_2 \sigma} \ceil{h / 2}} + r$ \label{dc:bitwise}
                
                %$Z.add(x, s_l \cdot s_r)$
                $Z.add(\{x, s_l \cdot s_r\})$\label{dc:endloops}
                
            }
        }
        \Return $Z$
     }
     
}
\end{algorithm}

% \begin{lemma}\label{th:dc-sorting}
% The total time complexity of sorting performed by \autoref{alg:dc} for all recursion calls is $\mathcal{O}(k \cdot \sigma^{k/2})$.
% \end{lemma}

% \begin{proof}
% See \hyperref[th:dc-sorting-appendix]{Appendix}.
% \end{proof}
%\comn{cut}

\begin{theorem}\label{th:dc}
The time complexity of \autoref{alg:dc} is $\mathcal{O}(k \sigma^{k/2} + |\mathcal{Z}|)$.
\end{theorem}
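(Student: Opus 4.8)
We sketch the argument; it amounts to charging the running time of \autoref{alg:dc} to the nodes of the recursion tree of \textsc{DC}. Since a call \textsc{DC}$(j,h,\varepsilon')$ with $h\ge 2$ recurses on two subwindows of sizes $\floor{h/2}$ and $\ceil{h/2}$ and the recursion stops at $h=1$, this tree is binary and its leaves are in bijection with the $k$ columns of $W$; hence it has $2k-1=\mathcal{O}(k)$ nodes, and a simple induction shows that a node at depth $i$ handles a subwindow of size at most $\ceil{k/2^i}$, with at most $2^i$ nodes at that depth. The precomputation of $M$ takes $\mathcal{O}(k)$ time, each base case (line~\ref{dc:base}) takes $\mathcal{O}(\sigma)=\mathcal{O}(1)$ time and there are $k$ of them, and the $\mathcal{O}(1)$ bookkeeping per internal node (two range‑product queries on $M$, the swap on line~\ref{dc:swap}, etc.) sums to $\mathcal{O}(k)$. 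So it remains to bound, over all internal nodes $v$ handling a subwindow of size $h_v$ and returning a list $Z_v$, the cost of the sort on line~\ref{dc:sort} and of the double loop on lines~\ref{dc:outer}--\ref{dc:endloops}.

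The key observation is a size bound on the two lists returned by the recursive calls at $v$: the left one has at most $\sigma^{\floor{h_v/2}}$ elements and the right one at most $\sigma^{\ceil{h_v/2}}$, so after the swap on line~\ref{dc:swap} the list $R$, being the \emph{shorter} of the two, has $|R|\le\sigma^{\floor{h_v/2}}\le\sigma^{h_v/2}$, while $L$ (the longer) has $|L|\le\sigma^{\ceil{h_v/2}}=\mathcal{O}(\sigma^{h_v/2})$ since $\sigma$ is constant. Consequently the sort on line~\ref{dc:sort} costs $\mathcal{O}(|R|\log|R|)=\mathcal{O}(h_v\sigma^{h_v/2})$. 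For the double loop: because $R$ is sorted by non‑increasing score and the inner loop breaks (line~\ref{dc:break}) at the first pair with $s_l\cdot s_r\le\varepsilon'$, for each fixed $(l,s_l)\in L$ the inner loop performs at most one iteration beyond the pairs it appends to $Z_v$ (the concatenation on line~\ref{dc:bitwise} is a single shift‑and‑add, hence $\mathcal{O}(1)$ in the word‑RAM model), so the double loop at $v$ runs in $\mathcal{O}(|L|+|Z_v|)=\mathcal{O}(\sigma^{h_v/2}+|Z_v|)$ time. Adding the sort, the total cost at $v$ is $\mathcal{O}(h_v\sigma^{h_v/2}+|Z_v|)$.

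It remains to sum $\mathcal{O}(h_v\sigma^{h_v/2})$ and $\mathcal{O}(|Z_v|)$ over all internal nodes. Grouping by depth, $\sum_v h_v\sigma^{h_v/2}=\mathcal{O}\big(\sum_{i\ge 0}2^i\ceil{k/2^i}\,\sigma^{\ceil{k/2^i}/2}\big)$; the $i=0$ term is $\mathcal{O}(k\sigma^{k/2})$, and for $i\ge 1$ the ratio of the $i$‑th term to the $0$‑th is $\mathcal{O}(\sigma^{-k(1/2-1/2^{i+1})})\le\mathcal{O}(\sigma^{-k/4})$, so the series is dominated by its first term and the sum is $\mathcal{O}(k\sigma^{k/2})$. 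For $\sum_v|Z_v|$, the top‑level call returns exactly $\mathcal{Z}$ (the correctness of the algorithm, sketched in the discussion preceding it), contributing $|\mathcal{Z}|$; every node at depth $i\ge 1$ returns at most $\sigma^{\ceil{k/2^i}}$ elements, and $\sum_{i\ge 1}2^i\sigma^{\ceil{k/2^i}}=\mathcal{O}(\sigma^{k/2})$ by the same geometric estimate, so $\sum_v|Z_v|=|\mathcal{Z}|+\mathcal{O}(\sigma^{k/2})$. Combining these with the $\mathcal{O}(k)$ preprocessing and base‑case cost yields $\mathcal{O}(k\sigma^{k/2}+|\mathcal{Z}|)$.

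The one delicate point — and the step I expect to be the main obstacle — is the bound on the sort: the naive estimate $|R|\le|L|\le|\mathcal{Z}|$ would only give $\mathcal{O}(k\,|\mathcal{Z}|\log|\mathcal{Z}|)$ at the root. Using instead that \emph{after the swap} $R$ is always the shorter list, hence of size $\mathcal{O}(\sigma^{h_v/2})$, is exactly what turns the sorting (and the per‑$l$ overhead in the double loop) into an additive $k\sigma^{k/2}$ term rather than a factor multiplying $|\mathcal{Z}|$; once this bound is in place, the rest is routine geometric‑series bookkeeping over the depths of the recursion tree.
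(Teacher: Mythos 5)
Your proof is correct and follows essentially the same route as the paper's: it isolates the per-call sorting cost via the observation that the sorted list is the \emph{shorter} of the two (hence of size at most $\sigma^{\floor{h/2}}$), bounds the double loop output-sensitively via the break condition, and sums both contributions over the recursion tree grouped by depth using a geometric estimate. The only (immaterial) difference is that you bound the outer-loop overhead $|L|$ by $\sigma^{\ceil{h/2}}$ and absorb it into the $k\sigma^{k/2}$ term, whereas the paper charges it to the number of alive $h$-mers (each element of $L$ extends to at least one alive $h$-mer), obtaining $\Theta(|\mathcal{Z}|)$ for the root's loops directly; both yield the stated bound.
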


\autoref{th:dc} (see \hyperref[th:dc-appendix]{Appendix} for the proof) gives an upper bound for running time of \autoref{alg:dc} as a function of the output size. Intuitively, the algorithm achieves linear complexity in output size for $|\mathcal{Z}|$ sufficiently large. This can be illustrated by the same example as for %\autoref{alg:bb}: 
\fab{branch-and-bound:} if $\varepsilon = 0$ for $W$ of positive values, then all $\sigma^h$ $h$-mers are alive for every recursive call.  It is then easy to see that the top call runs in $\Theta(\sigma^k)$ time, while all other calls take $\Theta(k \sigma^{k/2})$ in aggregate, giving a total runtime of complexity $\Theta(|\mathcal{Z}|) = \Theta(\sigma^k)$.

%Then, $|\mathcal{Z}| = \sigma^k$, and the algorithm takes $\mathcal{O}(k \sigma^{k/2} + \sigma^k) = \Theta(\sigma^k) = \Theta(|\mathcal{Z}|)$ time. 
%\comfab{The reasoning above was a little broken. To deduce a runtime of $\Theta(\sigma^k)$ (not $O()$) you need to go back to reasoning on how the actual algorithm works.  That's why I wrote ``it is easy to see''. Also, I prefer to avoid the use of $\varphi_{h}$, as it's internal to a proof.}
%$\varphi_{h/2} = \sigma^{h/2}$, and $T(h) = 2 \cdot T(h) + \sigma^{h/2}\log{\sigma^{h/2}} + \sigma^{h} = \Theta(\sigma^{h})$. 
%Since $|\mathcal{Z}| = \sigma^k$ for this example, it yields the best-case time complexity of $\Theta(|\mathcal{Z}|)$.

\subsection{Divide-and-conquer with Chained Windows}

While the problem of computing phylo-$k$-mers (\autoref{tpkc}) is defined for a $\sigma \times m$ matrix containing many $\sigma \times k$ windows, the algorithms described above only consider one window at a time. Thus, they ignore an important property of the sequence of windows of $P$: %adjacent windows of this matrix have redundant information. For example, any
two adjacent windows share $(k-1)$ identical columns, meaning that some computation is redundant.
Based on this observation, we suggest an improvement to the divide-and-conquer algorithm that is illustrated on~\autoref{fig:dccw}.

% \comer{We first explain the idea for even values of $k$ and later show  how to generalize it to odd values.}
%\comfab{Nikolai does not mean that he first shows the even case and then the odd case. (He means the constant-$\varepsilon$ case.}
We explain the idea for specific input and later will show how to generalize it to any input. 
Let $k$ be an even value, and let the matrix $P$ be such that $\max_a P_{a, j}$ is constant, $\forall j\in\{1, \dots, m\}$.
Then, local thresholds $\varepsilon_l$ and $\varepsilon_r$ for a fixed recursion level are equal and constant for all windows. Consider a window $W$ at position $j$, for which we recursively process its right subwindow $W[k/2 + 1: k]$, obtaining the list $R$ of alive $(k/2)$-mers and their scores $> \varepsilon_r$. Then, the list $R$ is identical to the list $L'$ of alive $(k/2)$-mers
for the left subwindow $W'[1 : k/2]$ of another window $W'$ starting at position $(j + k/2)$: it corresponds to the same range of columns (see \autoref{fig:dccw-simple}) and is computed for the same threshold. Naturally, we can reuse $R$ to compute the phylo-$k$-mers of $W'$. This allows us to make only one top-level recursive call for $W'$ instead of two (\autoref{fig:dccw-simple}). We iterate over windows with a step of $k/2$, always keeping the list $R$ of the preceding window for the next one. A sequence of windows at a distance of $k/2$ from each other is called \emph{a chain} of windows. We need to process $k/2$ such chains starting at positions $1, 2, \dots, k/2$ to cover all windows of $P$. \autoref{fig:dccw-chains} illustrates this idea. Note that we still have to make both recursive calls for the first window of every chain.

\begin{figure}[t]
  \centering
  \subfloat[]{
    \includegraphics[width=0.33\linewidth]{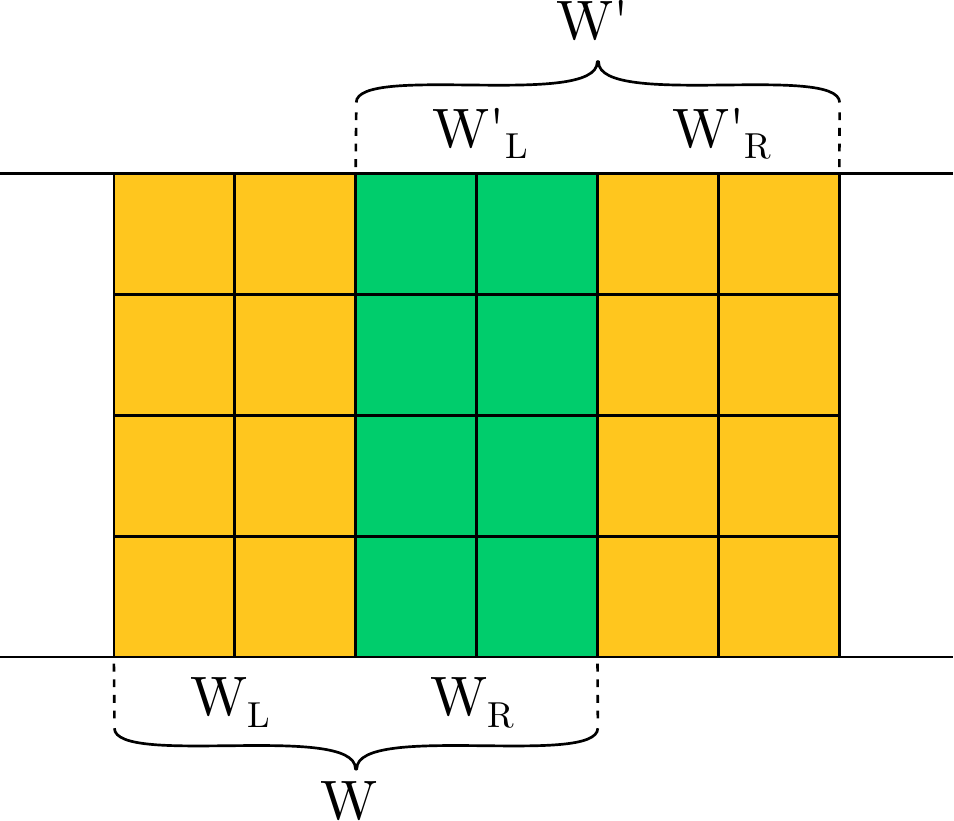}\label{fig:dccw-simple}
    }
  \subfloat[]{
    \includegraphics[width=0.6\linewidth]{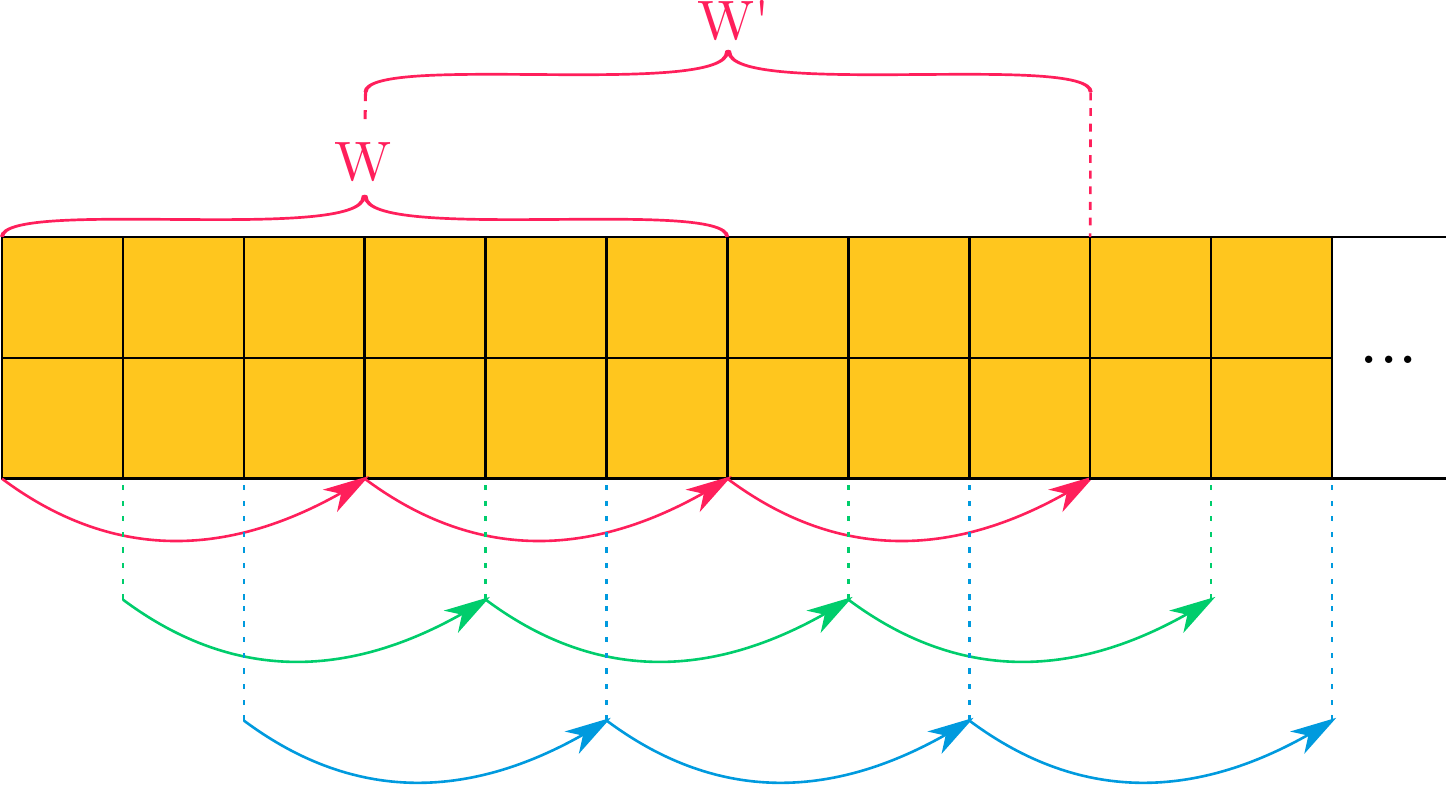}\label{fig:dccw-chains}
    }
  \caption{Illustrations for the divide-and-conquer algorithm with Chained Windows for \fab{even $k$}. %the case of even values of $k$. 
  (a) For $k=4$ and $\sigma = 4$, two windows $W$ and $W'$ at a distance of $(k/2)=2$ from each other share $(k/2)=2$ columns. Thus, the $(k/2)$-mers alive in $W_R$ or $W'_L$ can be computed with a single recursive call. (b) An example of three chains (colored in red, green, and blue) of windows for $k=6$ and $\sigma = 2$.
  %Every chain consists of windows that are $(k/2) = 3$ positions away from each other. Thus, every two consecutive windows of a chain share 3 columns. 
  \fab{The arrows} indicate the starting positions of the different windows within the same chain. The curly braces indicate the first two windows of the red chain. In this example, all possible windows are covered with three chains.
  }\label{fig:dccw}
\end{figure}

The described example relies on the assumption that the threshold $\varepsilon_r$ computed for $W$ is equal to the threshold $\varepsilon_l$ computed for $W'$, which from here onwards we call $\varepsilon'_l$, to distinguish it from the threshold for the left subwindow of $W$. 
This allowed us to assume $R=L'$, where $L'$ is the list of alive $(k/2)$-mers for the left subwindow of $W'$.  
Of course, $\varepsilon_r$ and $\varepsilon'_l$ are generally not equal, meaning that $R\neq L'$.
However, it is easy to see that one of these lists is always contained in the other: if $\varepsilon'_l<\varepsilon_r$,
then $R$ is a subset of $L'$, and vice versa otherwise. To be sure not to lose any alive $(k/2)$-mer for the subwindow shared by $W$ and $W'$, we then compute the list of $(k/2)$-mers that reach $\min(\varepsilon_r, \varepsilon'_l)$. This list equals $R\cup L'$, the largest of $R$ and $L'$.

The problem now becomes how to retrieve $R$ from $R\cup L'$, when computing alive $k$-mers for $W$, and how to retrieve $L'$ from $R\cup L'$, when computing alive $k$-mers for $W'$.  This can be achieved as follows: rearrange %\comf{I assume that this rearrangement is only done once per subwindow of size $(k/2)$, but I do not say this in the text, to be sure.}
$R\cup L'$ to separate all its elements that have a score greater than the pivot value of $\max(\varepsilon_r, \varepsilon'_l)$ (corresponding to the $(k/2)$-mers that are in the smaller of $R$ and $L'$) from those that have a score less or equal to $\max(\varepsilon_r, \varepsilon'_l)$ (corresponding to the $(k/2)$-mers that are only in the larger of $R$ and $L'$). Once the rearrangement around the pivot is performed, retrieving $R$ and $L'$ from their union is trivial.

\autoref{alg:dccw} \fab{in Appendix} presents the pseudocode of this algorithm for even values of $k$, where the \textsc{Partition} algorithm of quicksort \cite{cormen} is used to rearrange $R\cup L'$, using $\max(\varepsilon_r, \varepsilon'_l)$ as pivot. Note that the algorithm substitutes the top level of the recursion, and uses the divide-and-conquer from \autoref{sec:dc} for deeper recursive calls. The \textsc{Chain} function iterates over windows of the chain starting at position $j$. We assume that the data structure for range product queries is precomputed beforehand. 
$(k/2)$-mers for the two subwindows are combined in a way similar to the one of \autoref{alg:dc}.

Finally note that the Chained Windows technique above can also be adapted to the case of odd $k$, by splitting every window into three subwindows of sizes $\floor{k/2}$, 1, and $\floor{k/2}$ respectively, meaning that chains will now contain windows that are $\ceil{k/2}$ sites apart from each other.  We also note that the technique could in theory be adapted at every recursion level, so that only a single call to $DC(j,h,\varepsilon')$ is performed for each valid pair $(j,h)$, % This would entail the generation of a large list of alive $h$-mers, as 
with $\varepsilon'$ set to the minimum value across all possible sub-windows from which the call to $DC(j,h,\varepsilon')$ could be executed.
%which would entail the use of a very small (liberal) value for the threshold $\varepsilon'$.  
We leave a more thorough investigation of this idea for future work.

%%%%%%%%%%%%%%%%%%%%%%%%%%%%%%%%%%%%%%%%%%%%%%%%%%%%%%%%%%%%%%%%%%%%%%%%%%%%%%%%%%%%%%%%%%%%%%%%%%%%%%%%%%%%%%%%%%%%%%%% 

\section{Experiments} \label{sec:expe.s}
We implemented the described algorithms (\url{https://github.com/nromashchenko/xpas-algs} as part of \url{https://github.com/phylo42/xpas}) and ran them on simulated and real-world data, using an Intel(R) Xeon(R) W-2133 CPU @ 3.60GHz (8Mb cache size) machine with 62 Gb RAM (running under Linux 5.4.0-109-generic) and GCC 9.4.0. We measured the wall-clock time spent by every algorithm to process every window of the input matrices, and the peak memory consumption while processing all matrices. 
%\subsection{Input data and parameter choice}

%\comn{cut}
In the first experiment, we generated a thousand random matrices of one thousand positions as follows. Every $a \in \{A, C, G, T \}$ for every position gets a random score from the uniform distribution over $[0, 1]$. Then, every column is normalized so that its values sum up to one. Note that this means that the algorithms are tested over about one million windows of size $k$.

%\comer{reference dataset or benchmark dataset ? We need to say that those were previously used by so and so for evaluating PP algorithms. } 

In the real-world experiments, we take benchmark datasets previously used in other studies related to phylogenetic placement. Each dataset specifies a reference alignment and a reference tree. %and we
We infer two $P^u$ matrices per branch of the reference tree, as it is typically done for phylogenetic placement applications \cite{rappas}.
%In the second experiment, we processed matrices corresponding to the extended tree nodes of a real-world dataset \emph{neotrop} from \cite{mahe2017parasites}. 
The first real-world dataset, \emph{neotrop} \cite{mahe2017parasites}, consists of $512$ Eukaryote 18S rRNA sequences of $2.8$ Kbp length, resulting in $2042$ matrices of size $4 \times 2817$ ($\approx 5.7$M $k$-wide sub-matrices in total). The second real-world dataset, \emph{D155} \cite{rappas}, consists of $155$ complete Hepatitis C Virus (HCV) genome sequences, of $9.5$ Kbp length, resulting in $614$ matrices of size $4 \times 9552$ ($\approx 5.9$M $k$-wide sub-matrices in total). 
We calculate the $P^u$ matrices using \textsc{RAxML-ng} \cite{kozlov2019raxml}.
We use threshold values of $\varepsilon=(1.5/4)^k$ (the default in \textsc{rappas}). Thus, the threshold value does not depend on the input matrix, contrary to commonly used dynamic thresholds for PSSM based on p-values. However, it depends on the length of the $k$-mers computed.
We run algorithms for $k$ of $6, 8, 10, 12$, which are common values for processing DNA datasets for \textsc{rappas} (whose default value of $k$ for DNA is $10$). %Note that for $k = 6$, the threshold $\varepsilon$ equals $(1.5/4)^{6} \approx 0.00278$, which is the highest threshold value in all our experiments. Since the maximum score for a $k$-mer is one, and the minimum is bounded by zero, this threshold is a very low percentage of the maximum score. In the context of motif search using PSSM (considered in \cite{Beckstette_2006, martin2018fast}), threshold values recommended and used in practice are generally above $70$\%~\cite{hocomoco_2018}. This underlines that the purpose of using a threshold differs in the context of phylo-$k$-mers from that of motif search.

% \comfab{Very naif argument above. Our thresholds are on a multiplicative score, while motif search thresholds are on additive scores. They cannot be compared like that.  Maybe it makes sense to compare what fraction of the $4^k$ $k$-mers reach the threshold.  For the experiments we present where $k=10$, it can be anything up to $1\%$ (you can say ``as we will see in Fig.~...''). We could then compare this percentage to the one for number for motif searches.  I suspect it is MUCH lower than 1\%.}

%\comnr{The sentence above should be verified for random data.}

\begin{figure}[t]%[b] %FP: Since it takes ages to load I'm putting it to  the bottom of a page so that pdf readers show the rest of the page first
% NR: I removed it because in LNCS template it now inserts the figure AFTER the appendix
\centering
  \includegraphics[width=\textwidth]{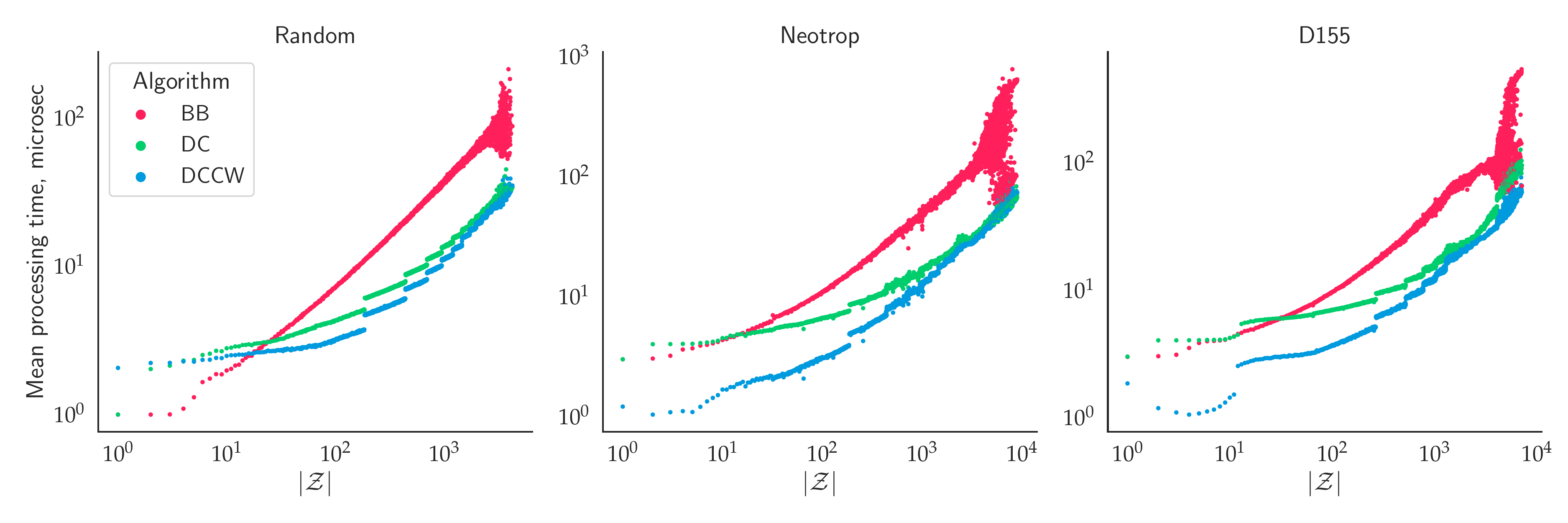}
  \caption{Average time in microseconds to process a window of the alignment plotted against the number 
  of phylo-$k$-mers alive for $k=10$ for the three algorithms considered here: branch-and-bound (BB), divide-and-conquer (DC), and 
  divide-and-conquer with Chained Windows (DCCW). Both axes are in log-scale.}\label{fig:original-mean-time-Z}
\end{figure}

%\subsection{Running time}
%\subsection{Running time and memory consumption}
%\comn{cut}
\subsubsection{Running time per window as a function of the number of alive $k$-mers.}
%\comnr{The text below does not make sense anymore.}

\autoref{fig:original-mean-time-Z} shows the mean running time per window of the three algorithms we have presented here: branch-and-bound (BB), divide-and-conquer (DC), and divide-and-conquer with Chained Windows (DCCW), plotted against the number of alive phylo-$k$-mers in the window, for $k=10$. %(default value in \textsc{rappas}).
Note that many different windows may correspond to a single value of the x-axis. 
Each point in \autoref{fig:original-mean-time-Z} shows the average time over all windows that happened to have the same number of alive $k$-mers. Both axes are in log-scale. From left to right, \autoref{fig:original-mean-time-Z} shows the plot for simulated data (\emph{Random} dataset), for \emph{neotrop} and for \emph{D155} datasets.
%NR: fix style
%the Random dataset, for Neotrop and for D155 datasets. 

%\comer{Comment the difference between the random and the real data set. \\
%Comment on the increased time variance at  the extreme right of x-axis.\\
%1. The random dataset shows the linearity of running time in function of the cardinality of Z.\\
%2. For real datasets, the dependance appears to ... }

%Let us first compare BB (red line) and DC (blue line). For random data, BB showed the best running time for $k$-mer-poor windows ($|\mathcal{Z}| < 25$) but a higher running time for $k$-mer-rich windows than DC. For the \emph{neotrop} dataset, BB showed a better running time for $|\mathcal{Z}| < 15$ than DC, but scaled worse with increasing $|\mathcal{Z}|$ as in the first experiment. %For \emph{D155}, DC showed a lower average running time in all cases.
First, let us observe the relative performance of the three algorithms. In experiments both on simulated and real-world data, BB (red points) showed a better running time for $k$-mer-poor windows ($|\mathcal{Z}| < 25$) than DC (green points). However, BB showed a worse running time for $k$-mer-rich windows. Let us now compare DC (green points) against DCCW (blue points). %For most values of $|\mathcal{Z}|$, DCCW showed better or similar mean running time compared to DC. %For most Z values, DCCW showed better or similar mean running time than DC.
For most values of $|\mathcal{Z}|$, DCCW showed better or similar mean running time compared to DC.
For real-world datasets, the gain in running time for DCCW is higher for $k$-mer-poor windows than for $k$-mer-rich windows. The stepwise behavior of these algorithms' running time (not happening for BB) is probably due to the allocation of additional memory needed to combine the results of the recursive calls. DCCW showed a lower running time than BB for most values of $|\mathcal{Z}|$ in all experiments.

As for the dependence of mean processing times on $|\mathcal{Z}|$, note that if we keep $k$ constant (as done in \autoref{fig:original-mean-time-Z}), the time complexity of \fab{BB} %\autoref{alg:bb} 
is $\Theta(|\mathcal{Z}|)$ (because of \autoref{th:bb}, and because every element of $\mathcal{Z}$ is part of the output). %\comf{Parenthesis probably obvious, in which case remove.}. 
The linear dependence of BB %(red plot)
(red points) on $|\mathcal{Z}|$ is somewhat more visible in the \emph{random} dataset than in the real-world datasets. %Note, however, that a true linear dependence would imply the slope of the plotted curve to tend to 1 (as $y=cx$ implies $\log y = \log c+ \log x$).
%\comn{cut}
As for the two divide-and-conquer algorithms, for low values of $|\mathcal{Z}|$, the runtime seems to be dominated by a term that is constant in $|\mathcal{Z}|$, which is consistent with the analysis provided in \autoref{th:dc}.

Interestingly, we remark a strong spread of the points for very high values of $|\mathcal{Z}|$ (extreme right of each panel in \autoref{fig:original-mean-time-Z}), which is mostly visible for BB but also affects the other two algorithms.  This is due to the fact that for very large values of $|\mathcal{Z}|$, only a few windows contribute to the computation of the mean processing time. %, meaning that 
For this reason, the computed means have an increasingly large variance. %If we exclude large values of $|\mathcal{Z}|$, for most other parts of the plot a large number of windows contribute to the computation of the mean processing time.
If we exclude large values of $|\mathcal{Z}|$, a large number of windows contribute to the computation of the mean processing time for most other parts of the plot. %NR: Sometimes I find the wording heavy. Just try to chop it and make easier to read. No content changes
To check this, \autoref{fig:all-z-hists} in Appendix plots the number of windows contributing to each value of $|\mathcal{Z}|$. The phenomenon is particularly strong for the real-world datasets, which usually only have %1, 2 or 3
one, two, or three windows contributing to the means for $|\mathcal{Z}|>7500$ (Neotrop) and $|\mathcal{Z}|>5500$ (D155).

\autoref{fig:all-z-hists} also allows us to appreciate the difference between the simulated and the real-world datasets. Compared to the simulated dataset, the real-world datasets (especially D155) contain an over-representation of windows contributing with a large number of alive $k$-mers. Despite these differences, the three panels in \autoref{fig:original-mean-time-Z} are fairly similar. The plot for the random dataset offers a somewhat less noisy version of the other two plots. %, where the linear dependence of running times as a function of $|\mathcal{Z}|$ is very clearly visible.}

\begin{figure}[ht]
\centering
  \includegraphics[width=\textwidth]{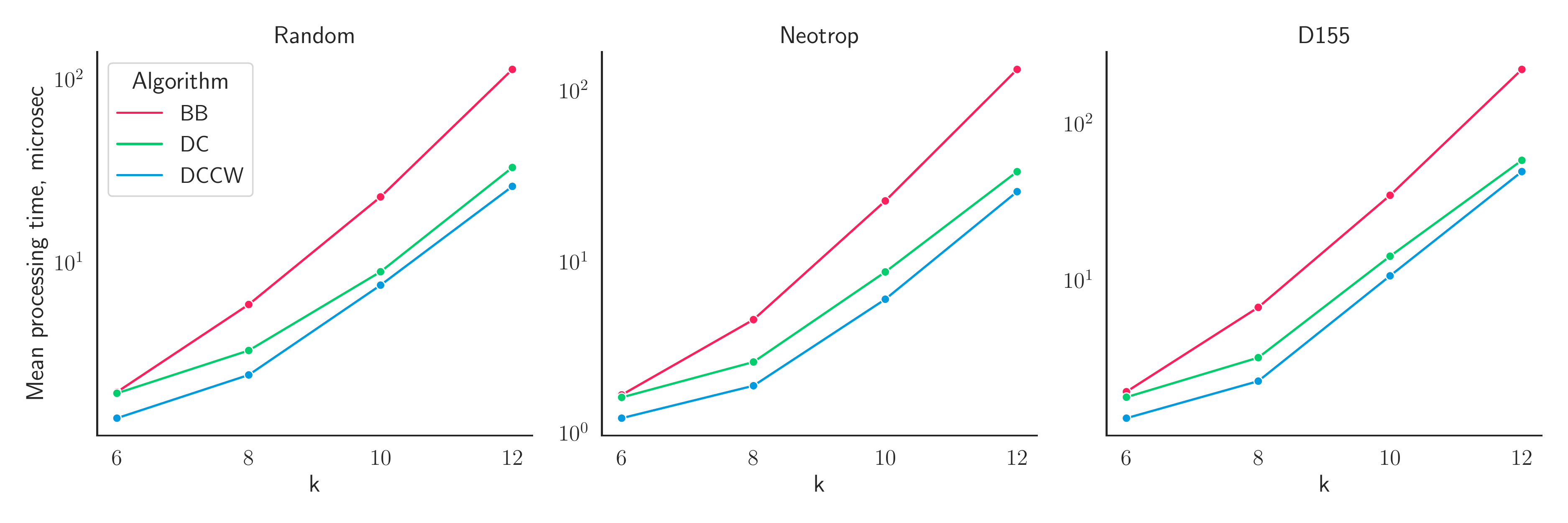}
  \caption{Time (in microseconds, log-scale) to process a window for different values of $k$, averaged across all windows encountered in a single dataset.
  }\label{fig:mean-time-per-k}
\end{figure}

\subsubsection{Running time over all windows.}
\fab{From \autoref{fig:original-mean-time-Z}, we can see that the relative performance of the algorithms is dependent on the number of alive $k$-mers in it.
%From the experimental results presented in \autoref{fig:original-mean-time-Z}, we can see that none of the algorithms is the fastest for 
%all windows. Even if we knew the number of alive $k$-mers in a given window, it is difficult to determine what algorithm should be faster to process this window. Moreover, in practice, we do not know how many $k$-mer-poor windows (low $|\mathcal{Z}|$) and how many $k$-mer-rich windows (high $|\mathcal{Z}|$) we need to process in advance. For these reasons, in
%an arbitrary window. 
In} \autoref{fig:mean-time-per-k}, we look at the overall performance of the algorithms \emph{per dataset}, averaging processing times over all windows in a single dataset. This has the effect of naturally weighting the contribution of $k$-mer-rich and $k$-mer-poor windows according to their \fab{frequency.} %in each dataset. 
\autoref{fig:mean-time-per-k} shows the mean running times for different values \fab{of} $k$, which also allows us to examine their dependence on $k$.

%\comfab{Where is the commenting of Fig 3!!! THIS IS URGENT. Shall I do it?!
%\comnr{Sorry I have to many things everywhere to fix. If you have reviewed everything you wanted, would be nice if you'd written a little paragraph here. I am going from the beginning to address the comments now.
%The point I wanted to make is that this figure is the ultimate criterion to select the algorithm that works better in practice. It is not clear from Fig.2 whether BB or DC would be better, since it depends on the number of poor/rich windows we need to process}}

%\comer{Comment the similarity between the random and the real data sets.\\
%Comment also on the absolute  difference between blue and red points: more than half an order of magnitude improvement! \fab{DONE}}

With the possible exception of DC for $k=6$, we note that the divide-and-conquer algorithms are %now faster
faster than BB across most experiments. The speed-up of DCCW over BB varies from about %40\% (for $k=6$) to between 500\% and 600\% for $k=12$.
$1.4$x (for $k=6$) to between $4.4$x and $5.2$x for $k=12$. In all three datasets, the advantage of the two versions of divide-and-conquer for $k$-mer-rich regions appears to far outweigh any potential disadvantage for $k$-mer-poor regions. As for the dependence on $k$, the roughly linear plot confirms the exponential dependence of running times on $k$ (as $|\mathcal{Z}|$ is typically an exponential function of $k$). 

\subsubsection{Memory consumption.}
Memory consumption of the \fab{three} algorithms is very close in practice. %We provide our measurements and discussion of it 
\fab{We provide measurements and discuss them} in Appendix (see \autoref{appendix:memory}, \autoref{tbl:memory}). %due to space limitations.

%%%%%%%%%%%%%%%%%%%%%%%%%%%%%%%%%%%%%%%%%%%%%%%%%%%%%%%%%%%%%%%%%%%%%%%%%%%%%%%%%%%%%%%%%%%%%%%%%%%%%%%%%%%%%%%%%%%%%%%% 

\section{Conclusion and future work}\label{sec:conclusion}
We have described the problem of phylo-$k$-mer computation and algorithms for solving it. We have presented an algorithm based on the divide-and-conquer approach and a variation of it that exploits the redundancy of adjacent probability matrix windows for the input alignment. To the best of our knowledge, these two algorithms are novel, even when considering a problem similar to phylo-$k$-mer computation arising in the literature about motif searches. Experiments on simulated and real-world data suggest that the new algorithms perform better than the previously known branch-and-bound algorithm in terms of running time, especially when a large number of phylo-$k$-mers must be output.
%\fab{\sout{The Chained Windows technique may be of interest for %algorithm %researchers developing rolling algorithms that apply divide-and-conquer for every input data window.}}
%\comfab{I don't understand the sentence above.
%\comnr{Consider an arbitrary algorithm applying a sliding window approach (solving a problem that has nothing to do with phylo-$k$-mers). If the work required for every window is done in divide-and-conquer manner, then something like chained windows may be possible for such algorithm.}
%\comfab{Well the whole novelty/difficulty with the DCCW algorithm is how to deal with outputs that are not identical for the two sub-calls. Not sure the idea you had is transferable to other problems. Unless you convince me otherwise, I vote to remove this sentence. }}

%A
The algorithmic results presented here, paired with an effective implementation, made it possible 
to improve running times \fab{of %phylo-$k$-mer computation done by 
\textsc{rappas}} by up to two orders of magnitude \cite{romashchenko2021computing}. It makes it practical for the new version of \textsc{rappas} (manuscript in preparation) to use parameter values that were hardly feasible before, e.g., values of $k > 10$. \fab{Note that all the required preprocessing steps (construction of the references, and the computation of the $P^u$ matrices) are independent of $k$, so phylo-$k$-mer computation from $P^u$ is indeed the bottleneck here.} %to the use of larger values of $k$.}

%\comnr{Little problem here is that if you go to the paper of RAPPAS, you'll see there results for $k=12$ that took many weeks to get. So claiming things above seems little strange, because it was actually possible. Of course, the problem is that now to get those results with RAPPAS it'd take easily x10 more time (or even longer) because of the famous bug. I don't know how to word it correctly}
%\comfab{It was possible, but not practical. The text is good now, I think.}
%\comfab{Maybe: ``ìt makes it possible'' $\rightarrow$ ``ìt makes it practical'' and/or
%``that were hardly feasible before'' $\rightarrow$ ``that had required months of computation''. I just made the 1st change because I think it's enough.}

%\comnr{Should yet mention that this work (paired with effective implementation) made it possible to use values of parameters for RAPPAS that were not tractable}
% ABSOLUTELY, and maybe say ``(manuscript in preparation)'' about RAPPAS2 ?
% New sentence should go in paragraph above, or form a paragraph on its own.

One direction for further research could exploit the phylogenetic nature of the input data: for tree nodes $u,u'$ that are closely located in the reference tree (e.g.,~in terms of the length of the path  separating them) the corresponding probability matrices $P^u,P^{u'}$ can also be expected to be close to each other in terms of probability values, potentially giving rise to similar sets of phylo-$k$-mers. Because of this, it is possible to imagine a procedure to \emph{update} the list of phylo-$k$-mers, as the matrix $P^u$ is modified.

\bibliographystyle{splncs04}
\bibliography{references}

% \begin{thebibliography}{8}
% \bibitem{ref_article1}
% Author, F.: Article title. Journal \textbf{2}(5), 99--110 (2016)

% \bibitem{ref_lncs1}
% Author, F., Author, S.: Title of a proceedings paper. In: Editor,
% F., Editor, S. (eds.) CONFERENCE 2016, LNCS, vol. 9999, pp. 1--13.
% Springer, Heidelberg (2016). \doi{10.10007/1234567890}

% \bibitem{ref_book1}
% Author, F., Author, S., Author, T.: Book title. 2nd edn. Publisher,
% Location (1999)

% \bibitem{ref_proc1}
% Author, A.-B.: Contribution title. In: 9th International Proceedings
% on Proceedings, pp. 1--2. Publisher, Location (2010)

% \bibitem{ref_url1}
% LNCS Homepage, \url{http://www.springer.com/lncs}. Last accessed 4
% Oct 2017
% \end{thebibliography}

\clearpage 

\appendix

\section{\fab{Pseudocodes}}

\begin{algorithm}

\caption{Depth-first branch-and-bound}\label{alg:bb}
\LinesNumbered
\SetFuncSty{textsc}
\SetKwInOut{Input}{Input}\SetKwInOut{Output}{Output}

\Input{An integer $k>0$, a $\sigma \times k$ probability matrix $W$, a threshold $\varepsilon$}
%\Output{An associative array $S$ storing phylo-$k$-mer scores}
\Output{The list of pairs $\{(w, s(w)) : s(w) > \varepsilon$\},  %for all $w \in \mathcal{Z}$
where $s(w)$ denotes the score of $w$ in $W$.}   % S(w) > \varepsilon \}$}

%\SetKwFunction{Calculate}{ComputePhyloKmers}
\SetKwFunction{BranchAndBound}{BranchAndBound}
\SetKwProg{Fn}{Function}{:}{}

%\Fn{\Calculate{$\widetilde W$, $R$}}{
%$S \gets $ empty associative array
$Z \gets$ empty list;

$L_j \gets \prod_{l=j+1}^k \max_{a \in \Sigma} {W_{a, l}}$ \text{ for all }$j = 1 \dots k-1$

\For{$i \gets 1 \ldots\sigma$}{
    \BranchAndBound{$i, 1, 0, 1$};
}
%\comfab{For consistency with the main text where $j$ is defined as the length of the prefix you want to extend, the root call should be on BranchAndBound{$(i, \fab{0}, 0, 1)$}; then you have to check all the $j$s below.
%\comnr{Changed the text. No need to change the proof.}
%}

%}
\Return $Z$
\BlankLine

\textcolor{gray}{\tcc{\fab{The function below considers extending a $(j-1)$-long prefix $p$ of score $s$ by character $a_i$}}}

\Fn{\BranchAndBound{$i, j, p, s$}}{ %\textcolor{gray}{\fab{\tcp{$p$ }}}
    $p \gets 2^{\ceil{\log_2 \sigma} } p + i-1$\label{bnb:bitwise} \textcolor{gray}{\tcp{\fab{Update the binary representation of $p$}}}
    %Extend the prefix $p$ by $a_i$}}

    $s \gets s \cdot W_{ij}$; \textcolor{gray}{\tcp{\fab{Update the score of the new prefix}}}

    \If ( \textcolor{gray}{\tcp*[h]{Lookahead score bound}}) {$s \leq \varepsilon / L_{j}$}{ 
        \Return
    }
     
     \If{$j = k$}{
        $Z.add(\{p, s\})$ \textcolor{gray}{\tcp{Report the $k$-mer and its score}}
     }
     \Else{
        \For{$i'\gets 1 \ldots\sigma$}{\label{bnb2:for}
            \BranchAndBound{$i', j+1, p, s$}\label{bnb2:call}
        }
        
     }

}
%\comfab{Index $i\in\{1\dots\sigma\}$, meaning that the formula at line 7 was still wrong. I corrected it, but check this and all lines of pseudocode involving the binary representation.}
%\comfab{I would remove the curly braces from $Z.add(\{p, s\})$ for consistency with the D\&C pseudocode.}
\end{algorithm}

\begin{algorithm}[t]
%\SetAlgoLined

\caption{Divide-and-conquer with Chained Windows for even $k$}
\label{alg:dccw}
\LinesNumbered
\SetFuncSty{textsc}
\SetKwInOut{Input}{Input}\SetKwInOut{Output}{Output}

\Input{A $\sigma \times m$ probability matrix $P$; a threshold $\varepsilon$}
\Output{
A list of pairs $\{(w, s(w)): s(w) > \varepsilon \}$ for every $k$-wide window of $P$
}

\SetKwFunction{DCCW}{DCCW}
\SetKwFunction{Chain}{Chain}
\SetKwFunction{Partition}{Partition}
\SetKwProg{Fn}{Function}{:}{}

\For (\textcolor{gray}{\tcp*[h]{For every chain}}) {$j \gets 1 \dots \floor{k / 2}$}{

$L \gets$ empty list

\For (\textcolor{gray}{\tcp*[h]{For every window}}) {$(W_{prev}$, $W$, $W_{next})$ $\in$ \Chain{$P$, $j$}}{

 $\varepsilon_{LB} \gets \varepsilon / M_{W_{prev}}(0 : \floor{k / 2})$ \textbf{if} $W_{prev}$ \textbf{else} $\varepsilon$ \textcolor{gray}{\tcp*[h]{Look behind}}
 
 $\varepsilon_{LA} \gets \varepsilon / M_{W_{next}}(\floor{k / 2} + 1 : k)$ \textbf{if} $W_{next}$ \textbf{else} $\varepsilon$ \textcolor{gray}{\tcp*[h]{Look ahead}}
 
 $Z_W, L \gets $ \DCCW{$L$, $\varepsilon_{LB}$, $\varepsilon_{LA}$}
}
}
\Return all lists $Z_W$

\BlankLine

\Fn{\DCCW{$L$, $\varepsilon_{LB}$, $\varepsilon_{LA}$}}{
    $Z \gets$ empty list; $swapped = false$
    
    $\varepsilon_l = \varepsilon / M(\floor{k / 2} + 1 : k)$;
    $\varepsilon_r = \varepsilon / M(0 : \floor{k / 2}))$ \textcolor{gray}{\tcp*[h]{Local thresholds}}
    
     \If (\textcolor{gray}{\tcp*[h]{If $W$ is the first window of the chain}}) {$L$ is empty}{
        $L \gets$ \DC{$0$, $\floor{k/2}, \varepsilon_l$}  
     }
        
    $R \gets$ \DC{$\floor{k/2} + 1$, $k - \floor{k/2}$, $\min(\varepsilon_r, \varepsilon_{LA})$}
    
    %\textcolor{gray}{\tcc{Find numbers of alive prefixes and suffixes; partition $L$ and $R$ if needed; Swap if needed and sort}}
    \textcolor{gray}{\tcc{Find the number of alive prefixes by partitioning $L$ if needed. In that case, this number is found during the partition}}
    $n_l \gets $ \Partition{$L$, $\varepsilon_l$} \textbf{if} {$\varepsilon_{LB} < \varepsilon_l$} \textbf{else} $|L|$
    
    $n_r \gets $ \Partition{$R$, $\varepsilon_r$} \textbf{if} {$\varepsilon_{LA} < \varepsilon_r$} \textbf{else} $|R|$
    
    \textcolor{gray}{\tcc{Swap $L$ and $R$ if needed and sort}}
        \If{$n_l > n_r$}{ Swap $L$ and $R$; Swap $n_l$ and $n_r$; $swapped = true$}
        
        Sort $R[1 : n_r]$ by score \textcolor{gray}{\tcp*[h]{Sorts only alive elements}}
    
        \ForEach{$(l, s_l) \in L[1 : n_l]$}{
            
            \ForEach{$(r, s_r) \in R[1 : n_r]$}{
                \lIf{$s_l \cdot s_r \le \varepsilon$}
                {
                    \textbf{break}
                }
                $x \gets r \cdot 2^{\ceil{\log_2 \sigma} \floor{k / 2}} + l$ \textbf{if} $swapped$ \textbf{else} $l \cdot 2^{\ceil{\log_2 \sigma} \ceil{k / 2}} + r$ 
                
                $Z.add(x, s_l \cdot s_r)$
                
            }
        }
        \Return $Z, (L$ \textbf{if} $swapped$ \textbf{else} $R)$ \textcolor{gray}{\tcp*[h]{Report the result and suffixes}}
     %\comfab{This pseudocode can be moved to Appendix. Its contents are not really useful to understand the text.}
}
%\comfab{If you don't have time to adapt the pseudocode to the description given in the text, you can just leave it as is, as long as it's correct. If we think the text is now sufficiently clear, this pseudocode could be moved to the Appendix.}
\end{algorithm}

\clearpage

%\comer{Algo~\ref{alg:dccw} \\
%Lines 15 and 16 combine an affectation and an if then else; Not clear.\\
%Same for lines 4 and 5, and line 23. Dont get these.
%\comnr{This is Python style which is quite common. I use it to save space. Are you sure we need to unroll it? It would take 2 lines instead %of 1 for every instance of these compact if-then-else statements}
%\comfab{I understand it, but the problem is that you cannot expect a reader to know what ``partition'' returns (the last position of the smaller set?). Add a comment?}
%}

%\section{More computational complexity results}
\section{Computational complexity results}

\subsection{Complexity of the branch-and-bound \fab{algorithm}}

%\comfab{The numbers of the theorems should match those in the main paper. I fixed this below:}

\begin{customthm}{\ref{th:bb}}\label{th:bb-appendix}
%\begin{theorem}
Depth-first branch-and-bound runs in $\mathcal{O}(k \cdot |\mathcal{Z}|)$ time for one window of $k$ columns.
%\end{theorem}
\end{customthm}
%\comfab{The statement above could be changed into: \fab{The phylo-$k$-mer computation problem can be solved in $\mathcal{O}(k \cdot |\mathcal{Z}|)$ time, with the depth-first branch-and-bound algorithm described above.}}

%\comnr{This is, obviously, only true for the lookahead B\&B.}

\begin{proof}
Let us consider the call tree of the algorithm where every tree node of depth $j$ corresponds to considering a prefix of length $j$.   %We call a prefix $p$ and the corresponding node \emph{alive} if there is a suffix $s$ of length $k - j$ such that the score of $p \cdot s$ reaches the threshold, and \emph{dead} otherwise. 
%We call nodes alive or dead corresponding 
We call a node alive if it corresponds to an alive prefix, and dead otherwise. 
%We spend $\mathcal{O}(\sigma k)$ time to precompute the scores of best suffixes. 
Let $\xi_A^j$ and $\xi_D^j$ be the numbers of visited nodes of depth $j$ that are alive and dead, respectively. 
Trivially, $\xi_A^{k} = |\mathcal{Z}|$. %\comf{In fact $\xi_A^{j}$ is the number of prefixes of length $j$ of strings in $Z$.} 
Note that every alive prefix of length $j-1$ is extended into at least one alive prefix of length $j$, implying %\comf{Before you said that it is straightforward to show  that $\xi_A^{j-1} \le \xi_A^{j}$, but this is actually the key point of the proof.} %It is straightforward to show 
that $\xi_A^{j-1} \le \xi_A^{j}$. Therefore, $\xi_A^1 \le \xi_A^2 \le \dots \le \xi_A^{k-1} \le \xi_A^{k}$, and $\sum_{j=1}^k \xi_A^j \le k\xi_A^k = k |\mathcal{Z}|$. Now, let us count dead nodes: $\xi_D^j < \sigma \xi_A^{j-1}$, and since $\xi_A^{j-1} \le \xi_A^{j}$, then $\xi_D^j < \sigma \xi_A^{j}$. Therefore, $\sum_{j=1}^k \xi_D^{j} < \sum_{j=1}^k \sigma \xi_A^{j} = \sigma \sum_{j=1}^k \xi_A^j \le \sigma k  |\mathcal{Z}|$. Finally, the total number of visited nodes is $\sum_{j=1}^k (\xi_A^{j} + \xi_D^{j}) < k |\mathcal{Z}| + \sigma k  |\mathcal{Z}| = (\sigma + 1) k |\mathcal{Z}| = \mathcal{O}(k |\mathcal{Z}|)$, assuming that $\sigma$ is a constant. We visit every node in constant time by virtue of the word-RAM model assumptions. Besides that, it takes $\Theta(\sigma k)$ to precompute $L$. Then, the total time complexity is $\mathcal{O}(\sigma k + k |\mathcal{Z}|) = \mathcal{O}(k |\mathcal{Z}|)$.
\end{proof}

%\comfab{What would be interesting is to clarify whether the worst-case complexity given in \autoref{th:bb}, meaning that there exists a subset of instances where the running time is indeed $\Theta(k |\mathcal{Z}|)$. I believe we do have ways to construct such a subset: Just take the binary matrix with all identical columns (see my past email) and choose a threshold so that $Z$ is the set of $k$-mers that are at a Hamming distance $\le h$ from the best $k$-mer... anyway the proof of this may be too long and it's a little off-topic.  Too bad as it's a nice family of examples.}

%\comfab{I now believe that the example below is very much worth including.
%It is interesting particularly because, unless I mis-remember, in one of the papers you showed me the authors claimed a complexity of $O(|\mathcal{Z}|)$ for branch-and-bound. 
%So our proof below corrects that potential mistake, and our complexity result may be cited in the future as the correct one even for the motif-search literature.}
%\comfab{An interesting take-home lesson of the example below is that every time the number of dead prefixes considered by BB is $\Omega(k\cdot|\mathcal{Z}|)$, time complexity is $\Theta(k\cdot|\mathcal{Z}|).$ Note that this does not happen for the best case scenario presented in the main text.}
%\comfab{I've expanded the proof below (VERY thorough now, maybe too much.).  The new parts are now the only ones that are colored.}
\begin{example} (A case where $|\mathcal{Z}|=\Theta(k^c)$ for a small constant $c$, and \fab{depth-first branch-and-bound} runs in $\Theta(k^{c+1}) = \Theta(k \cdot |\mathcal{Z}|)$.) 

Consider the instances of the phylo-$k$-mer computation problem with the following form: suppose the alphabet is binary and that all the columns of $P$ are identical, with $P_{0,j}=p>1/2$, and $P_{1,j}=1-p<1/2$.
Since we are only interested in the behavior of \fab{the algorithm} %\autoref{alg:bb} 
on a single window, we can assume $P$ has exactly $k$ columns. The score of any binary sequence $w\in\{0,1\}^k$ is given by:
\[
S(w) = p^{k-h(w)}\cdot (1-p)^{h(w)},
\]
where $h(w)$ is the number of 1s in $w$ (or equivalently the Hamming distance between $w$ and $0^k$).
Note that $S(w)$ is strictly decreasing in $h(w)$.

%Now suppose that we set $\varepsilon := S(1^{k/2}0^{k/2}) = p^{k/2}(1-p)^{k/2}$. %\varepsilon_2 % FP this may be useful in case we want to expand $\mathcal{Z}$
%Then the set of alive $k$-mers $\mathcal{Z}$ coincides with all the binary strings with score that is strictly better than the score of $k$-mer $0^{k-2}11$, that is, all $k$-mers consisting of all 0s and at most one 1.
%Then a $k$-mer is alive if and only if it has a score that is strictly better than the score of $1^{k/2}0^{k/2}$, that is it has strictly more than $k/2$ 0s. It is easy to see...

Now suppose that we set $\varepsilon = S(1^{c+1}0^{k-c-1}) = p^{k-c-1}(1-p)^{c+1}$, for some constant $c$. (Note that since $c$ is constant and $k$ is not, we can assume $c\ll k$.)
Then a $k$-mer $w$ is alive if and only if %it has a score that is strictly better than the score of $1^c0^{k-c}$, or in other words, if and only if it has at most $c-1$ 1s.
$h(w)\le c$, i.e., it has at most $c$ 1s. Because of this, 
%\[|\mathcal{Z}|=1+{k \choose 1}+\ldots + {k \choose c}=\Theta(1)+\Theta(k)+\ldots+\Theta(k^{c})=\Theta(k^{c}).\]
\[ |\mathcal{Z}|=1 + \binom{k}{1}+\ldots + \binom{k}{c} =\Theta(1)+\Theta(k)+\ldots+\Theta(k^{c})=\Theta(k^{c}). \]

Let us now consider the set of $k$-mers with $h(w)=c+1$, i.e.,~whose number of 1s is exactly $c+1$. There are exactly $\binom{k}{c+1}=\Theta(k^{c+1})$ such $k$-mers.  We now prove that each of these $k$-mers has a different dead prefix that is visited by \fab{the algorithm:} %\autoref{alg:bb}: 
Let $w$ be such that $h(w)=c+1$ and let $p_w$ be the maximal alive prefix of $w$, ending with the character preceding the last 1 in $w$.  Because $p_w$ is an alive prefix, it is visited by \fab{the algorithm,} %\autoref{alg:bb}, 
as well as its dead extension $p_w1$ (also a prefix of $w$), which however is immediately recognized as dead, as it cannot be extended in any alive $k$-mer.
Thus each of the $\Theta(k^{c+1})$ $k$-mers with $h(w)=c+1$ has a dead prefix $p_w1$ visited by \fab{the algorithm,} %\autoref{alg:bb}, 
and moreover all the prefixes $p_w1$ obtained in this way are clearly different, as $p_w$ uniquely determines $w$. % Shorter, simpler
% extending $p_w1$ with a sequence of 0s gives back $w$.

% OLD MUCH SHORTER version:
%Now note that, during the execution of \autoref{alg:bb}, the set of dead prefixes coincides with the set of prefixes with exactly $c+1$ 1s.  Each of these prefixes must be visited at least once and there are $\Theta(k^{c+1})$ such prefixes.  

Because the total number of visited dead prefixes for this example is bound below by a function in $\Theta(k^{c+1})$, the running time of \fab{depth-first branch-and-bound} %\autoref{alg:bb} 
is $\Omega(k^{c+1})=\Omega(k\cdot|\mathcal{Z}|)$. Combining this result with the statement of \autoref{th:bb}, we obtain that on this example \fab{depth-first branch-and-bound} %\autoref{alg:bb} 
runs in $\Theta(k^{c+1})=\Theta(k\cdot|\mathcal{Z}|)$ time.
\end{example}

\subsection{Complexity of the divide-and-conquer \fab{algorithm}}
We will approach the analysis of time complexity of \autoref{alg:dc} as follows. First, we will analyze the complexity of the sorting performed in all recursion calls. Then, we will examine the complexity of the rest: the base case and the combination of prefixes and suffixes for all recursion calls. For the first part, lines \ref{dc:swap}---\ref{dc:sort} take $\Theta(|R| \log |R|)$ time ($R$ might be swapped with $L$ if $|L| < |R|$). 
%The number of elements in $|R| \leq \sigma^{\ceil{h/2}}$ as it contains either $\floor{h/2}$-mers or $\ceil{h/2}$-mers. 
% The number of elements in |R|, a number, is not syntactically correct
Note that, after the potential swap, $|R| =\min\{|L|,|R|\} \le \sigma^{\floor{h/2}}$. From now on, we simply write ``$(h/2)$'' instead of $\floor{h/2}$ or $\ceil{h/2}$ %-mers'' 
to simplify the notation since it does not change the complexity.  %\autoref{th:dc-sorting} 
%\comf{The correction above is because of a trivial observation: if the smaller between $L$ and $R$ contains $\ceil{k/2}$-mers, it still must be smaller than the other set, which contains $\floor{k/2}$ $k$-mers.} %I think the last sentence can be removed, unless you want to say that you will write $(h/2)$ everywhere (e.g.~in both statements below).}
% NR: this is to make it legit to write thins like O(k sigma^(k/2)) in complexities. Makes it easier

\begin{lemma}\label{th:dc-sorting-appendix}
The total time complexity of sorting performed by \autoref{alg:dc} for all recursion calls is $\mathcal{O}(k \cdot \sigma^{k/2})$.
\end{lemma}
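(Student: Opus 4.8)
The plan is to bound the sorting cost by summing, over all recursion levels, the total number of comparisons. At a recursion call with window width $h$, the sort on line~\ref{dc:sort} runs on the smaller of the two returned lists, so it costs $\mathcal{O}(|R|\log|R|)$ with $|R|\le\sigma^{\lceil h/2\rceil}$; but more importantly, I want to charge this cost against the number of \emph{distinct} calls at each level and the maximum list size there. First I would observe that the recursion tree of \autoref{alg:dc} on a single window of width $k$ is a balanced binary tree: the root handles width $k$, its two children handle widths $\lfloor k/2\rfloor$ and $\lceil k/2\rceil$, and so on, so at ``level'' $t$ (counting the root as level $0$) there are $2^t$ calls, each with window width roughly $k/2^t$.

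The key step is the per-level bound. At level $t$ there are $2^t$ calls; each sorts a list of size at most $\sigma^{h/2}$ where $h\approx k/2^t$, i.e.\ size at most $\sigma^{k/2^{t+1}}$. Each such sort costs $\mathcal{O}\!\bigl(\sigma^{k/2^{t+1}}\cdot \log \sigma^{k/2^{t+1}}\bigr)=\mathcal{O}\!\bigl(\sigma^{k/2^{t+1}}\cdot (k/2^{t+1})\bigr)$, using that $\sigma$ is constant. Summing over the $2^t$ calls at level $t$ gives a per-level cost of $\mathcal{O}\!\bigl(2^t\cdot (k/2^{t+1})\cdot\sigma^{k/2^{t+1}}\bigr)=\mathcal{O}\!\bigl(k\,\sigma^{k/2^{t+1}}\bigr)$. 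The dominant term is $t=0$, giving $\mathcal{O}(k\,\sigma^{k/2})$, and the remaining levels decay doubly-exponentially in the exponent of $\sigma$: $\sum_{t\ge 0} k\,\sigma^{k/2^{t+1}} = k\bigl(\sigma^{k/2}+\sigma^{k/4}+\sigma^{k/8}+\cdots\bigr)$. Since $\sigma\ge 2$, this geometric-like sum is dominated by its first term, so the total is $\mathcal{O}(k\,\sigma^{k/2})$, which is exactly the claimed bound.

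The main obstacle I anticipate is making the decay argument fully rigorous: the sum $\sigma^{k/2}+\sigma^{k/4}+\cdots$ is not a geometric series (the ratio between consecutive terms is not constant), so I would need a short lemma of the form $\sum_{i\ge 1}\sigma^{k/2^i}\le c\cdot\sigma^{k/2}$ for some constant $c$ depending only on $\sigma$. This follows because $\sigma^{k/4}\le\sigma^{k/2}\cdot\sigma^{-k/4}\le \tfrac12\sigma^{k/2}$ already for $k\ge 2$ and $\sigma\ge 2$, and subsequent terms are even smaller, so the tail is at most $\sigma^{k/2}\sum_{i\ge 1}2^{-(i-1)}=2\sigma^{k/2}$ by comparison; hence the whole sum is $\le 3\sigma^{k/2}$. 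A secondary, more bookkeeping-style point is the floor/ceiling sloppiness: writing ``$h/2$'' for both $\lfloor h/2\rfloor$ and $\lceil h/2\rceil$ and ``$k/2^t$'' for the width at level $t$ introduces small discrepancies, but since $\sigma^{\lceil h/2\rceil}\le\sigma\cdot\sigma^{\lfloor h/2\rfloor}$ and $\sigma$ is a constant, these are absorbed into the $\mathcal{O}$-notation, exactly as the paragraph preceding the lemma already notes. Putting the per-level bound and the tail estimate together yields the $\mathcal{O}(k\,\sigma^{k/2})$ bound on the total sorting time.
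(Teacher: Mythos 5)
Your proposal is correct and follows essentially the same route as the paper's proof: a level-by-level accounting of the recursion tree, with $2^t$ calls at level $t$ each sorting a list of size at most $\sigma^{k/2^{t+1}}$ for a per-level cost of $\mathcal{O}(k\,\sigma^{k/2^{t+1}})$, after which the sum $\sigma^{k/2}+\sigma^{k/4}+\cdots$ is shown to be $\mathcal{O}(\sigma^{k/2})$. The only (immaterial) difference is in that last step, where the paper bounds the sum by $\sum_{i=0}^{k/2}\sigma^i$ while you compare it to a geometric series with ratio $1/2$.
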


\begin{proof}
It is easy to see that any recursion call at depth $d$ in the recursion tree (see \autoref{fig:dc-sorting}) involves sorting a list of $(k/2^{d+1})$-mers. Trivially, the size of this list is at most $\sigma^{k/2^{d+1}}$. Sorting it can be done in no more than $c \cdot \sigma^{k/2^{d+1}}\log \sigma^{k/2^{d+1}}= c'\cdot k/2^{d+1} \cdot \sigma^{k/2^{d+1}}$ time (for some positive constants $c,c'$, and assuming $\sigma$ is constant).  Now note that at recursion depth $d$ there are at most $2^d$ recursion calls, meaning that the total runtime spent for sorting at recursion depth $d$ is $\mathcal{O}(k \cdot \sigma^{k/2^{d+1}})$ (corresponding to the rightmost column in \autoref{fig:dc-sorting}).

Considering all recursion levels, the total time spent on sorting is therefore $\mathcal{O}(k \cdot S)$, where $S = \sigma^{k/2} + \sigma^{k/4} + \sigma^{k/8} + \dots + \sigma^2 + \sigma$.  Now note that
\[
S \;<\; \sum_{i=0}^{k/2} \sigma^i \;=\; \frac{\sigma^{k/2+1} - 1}{\sigma - 1} \;=\; \mathcal{O}(\sigma^{k/2}),
\]
which concludes the proof.
\end{proof}

\begin{figure}
  \centering
\begin{tikzpicture}[level/.style={sibling distance=40mm/#1}]
\node  (z){$c_0 \cdot  \sigma^{k/2} \log \sigma^{k/2}$}
  child {node (a) {$c_{11} \cdot \sigma^{k/4} \log \sigma^{k/4}$}
    child {node (b) {$c_{21} \cdot \sigma^{k/8} \log \sigma^{k/8}$}
      child {node {$\vdots$}
        child {node (d) {$c_{n1} \cdot \sigma \log \sigma$}
            child [grow=left] {node (dl) {$n$} edge from parent[draw=none]
                child [grow=up] {node (ddots) {$\dots$} edge from parent[draw=none]
                    child [grow=up] {node (d2) {$2$} edge from parent[draw=none]
                        child [grow=up] {node (d1) {$1$} edge from parent[draw=none]
                            child [grow=up] {node (d0) {$0$} edge from parent[draw=none]
                                child [grow=up] {node (depth) {\emph{depth}} edge from parent[draw=none]}                                
                            }
                        }
                    }
                }
            }
        }
        %child {node (e) {}}
      } 
      child {node {$\vdots$}}
    }
    child {node (g) {$\dots$}
      child {node {$\vdots$}}
      child {node {$\vdots$}}
    }
  }
  child {node (j) {$c_{12} \cdot \sigma^{k/4} \log \sigma^{k/4}$}
    child {node (k) {$\dots$}
      child {node {$\vdots$}}
      child {node {$\vdots$}}
    }
  child {node (l) {$c_{24} \cdot \sigma^{k/8} \log \sigma^{k/8}$}
    child {node {$\vdots$}}
    child {node (c){$\vdots$}
      %child {node (o) {}}
      child {node (p) {$c_{n(k/2)} \cdot \sigma \log \sigma$}
        child [grow=right] {node (q1) {$=$} edge from parent[draw=none]
          child [grow=right] {node (q) {$c'_{n} \cdot k \cdot \sigma $} edge from parent[draw=none]
            child [grow=up] {node (r) {$\vdots$} edge from parent[draw=none]
              child [grow=up] {node (s) {$c'_2 \cdot k \cdot \sigma^{k/8}$} edge from parent[draw=none]
                child [grow=up] {node (t) {$c'_1 \cdot k \cdot \sigma^{k/4}$} edge from parent[draw=none]
                  child [grow=up] {node (u) {$c'_0 \cdot k \cdot \sigma^{k/2}$} edge from parent[draw=none]}
                }
              }
            }
            child [grow=down] {node (v) {$\mathcal{O}(k \cdot \sigma^{k/2})$}edge from parent[draw=none]}
          }
        }
      }
    }
  }
};
\path (z) -- (u) node [midway] {=};

\path (a) -- (j) node [midway] {+};
\path (b) -- (g) node [midway] {+};
\path (k) -- (g) node [midway] {+};
\path (k) -- (l) node [midway] {+};

%\path (o) -- (e) node (x) [midway] {$\cdots$};
\path (d) -- (p) node (x) [midway] {$\cdots$};

\path (q) -- (r) node [midway] {+};
\path (s) -- (r) node [midway] {+};
\path (s) -- (t) node [midway] {+};
\path (s) -- (l) node [midway] {=};
\path (t) -- (u) node [midway] {+};
\path (j) -- (t) node [midway] {=};

\path (q) -- (v) node [midway] {=};
%\path (e) -- (x) node [midway] {+};
%\path (o) -- (x) node [midway] {+};
\path (p) -- (x) node [midway] {+};
\path (d) -- (x) node [midway] {+};

\path (r) -- (c) node [midway] {$\cdots$};
\end{tikzpicture}
 \caption{Illustration for the work required to perform sorting at all recursion levels of \autoref{alg:dc}.}\label{fig:dc-sorting}
\end{figure}

\begin{customthm}{\ref{th:dc}}
%\begin{theorem}
\label{th:dc-appendix}
The time complexity of \autoref{alg:dc} is $\mathcal{O}(k \sigma^{k/2} + |\mathcal{Z}|)$.
%\end{theorem}
\end{customthm}

\begin{proof}
% Let $\varphi_h$ denote the number of alive $h$-mers starting at position $j$, and $\varphi_{L, h/2}$, $\varphi_{R, h/2}$ the numbers of prefixes and suffixes returned for the left and right subwindows. Let $\varphi_{h/2}$ denote $\min(\varphi_{L, h/2}, \varphi_{R, h/2})$.\comf{$\varphi_{L, h/2}$, $\varphi_{R, h/2}$ are only used to define $\varphi_{h/2}$. Superfluous notation?} Note that if $h=k$, then $\varphi_k = |\mathcal{Z}|$; also note that $0 \le \varphi_{h/2} \le \varphi_{h}$. 
% Line \ref{dc:base} takes $\fab{\Theta}(\sigma) = \fab{\Theta}(1)$ time, \fab{and} lines \ref{dc:swap}---\ref{dc:sort} take $\Theta(\varphi_{h/2} \log \varphi_{h/2})$ time.

\hyperref[dc:base]{Line~\ref{dc:base}} (the base case) takes $\Theta(\sigma) = \Theta(1)$ time. Since the complexity of sorting is given by \autoref{th:dc-sorting-appendix}, we only need to estimate the complexity of the loops at lines \ref{dc:outer}---\ref{dc:endloops} to complete the analysis. Note that every element of $L$ can give rise to at most one dead $h$-mer, and at least one alive $h$-mer, meaning that there can be at most one dead $h$-mer per alive $h$-mer. Let $\varphi_h$ denote the number of alive $h$-mers for a recursive call acting on a window of size $h$, $\varphi_h \leq \sigma^{h}$. Then, the total number of $h$-mers considered (dead and alive) by the loops is $\Theta(\varphi_h)$. In other words, lines \ref{dc:break}---\ref{dc:endloops} are executed $\Theta(\varphi_h)$ times, each of which takes constant time under the assumptions of the word-RAM model. Then, for the top-level recursion call, the loops take $\Theta(\varphi_k) = \Theta(|\mathcal{Z}|)$ time. 

Now, %\comf{Need new paragraph}
let us give an upper bound for all time spent by the loops in deeper recursion calls. Each of the two recursion calls of depth $1$ (when $h = k/2$) takes $\mathcal{O}(\sigma^{k/2})$ time; each of the four recursion calls of depth $2$ ($h = k/4)$ takes $\mathcal{O}(\sigma^{k/4})$ time, and so on (see \autoref{fig:dc-combination}). %in Appendix for an illustration).
In total, for all $2^i$ calls of depth $i$, the loops take $\mathcal{O}(2^i \sigma^{k/{2^i}})$ time, which gives us $\mathcal{O}(\sum_{i=1}^{\log k} 2^i \sigma^{k/{2^i}})$ for all depths (excluding the root). Let us substitute $ t =\sum_{i=1}^{\log k} 2^i \sigma^{k/{2^i}} $. Then,
% No new paragraph here
\[ t = \quad \sum_{i=1}^{\log k} 2^i \sigma^{k/{2^i}} \quad\leq\quad \sum_{i=1}^{\log k} 2^i \sigma^{k/{2}} \quad=\quad \sigma^{k/{2}} \sum_{i=1}^{\log k} 2^i \quad=\quad  \sigma^{k/{2}}\, 2 \left(2^{\log{k}} - 1\right).\]
%$$ t = \mathcal{O}(\sigma^{k/{2}} \cdot 2^{\log k}) = \mathcal{O}(k \cdot \sigma^{k/{2}})$$
% NOR HERE
The last step is due to the well-known equality $\sum_{i=0}^{h-1} 2^i = 2^h - 1$. Therefore, the loops take $\mathcal{O}(t)=\mathcal{O}(k \cdot \sigma^{k/{2}})$ time %in total 
for all recursive calls, with the exception of the root call, for which they take $\Theta(|\mathcal{Z}|)$ time. The theorem follows after \autoref{th:dc-sorting-appendix}.

\end{proof}

\begin{figure}[t]
  \centering
\begin{tikzpicture}[level/.style={sibling distance=35mm/#1}]
\node  (z){$c_0 \cdot  \varphi_k$}
  child {node (a) {$c_{11} \cdot \sigma^{k/2}$}
    child {node (b) {$c_{21} \cdot \sigma^{k/4}$}
      child {node {$\vdots$}
        child {node (d) {$c_{n1} \cdot \sigma $}
            child [grow=left] {node (dl) {$n$} edge from parent[draw=none]
                child [grow=up] {node (ddots) {$\dots$} edge from parent[draw=none]
                    child [grow=up] {node (d2) {$2$} edge from parent[draw=none]
                        child [grow=up] {node (d1) {$1$} edge from parent[draw=none]
                            child [grow=up] {node (d0) {$0$} edge from parent[draw=none]
                                %child [grow=up] {node (depth) {\emph{depth}} edge from parent[draw=none]}  
                                % Fabio did the above
                            }
                        }
                    }
                }
            }
        }
        %child {node (e) {}}
      } 
      child {node {$\vdots$}}
    }
    child {node (g) {$\dots$}
      child {node {$\vdots$}}
      child {node {$\vdots$}}
    }
  }
  child {node (j) {$c_{12} \cdot \sigma^{k/2}$}
    child {node (k) {$\dots$}
      child {node {$\vdots$}}
      child {node {$\vdots$}}
    }
  child {node (l) {$c_{24} \cdot \sigma^{k/4}$}
    child {node {$\vdots$}}
    child {node (c){$\vdots$}
      %child {node (o) {}}
      child {node (p) {$c_{nk} \cdot \sigma $}
        child [grow=right] {node (q1) {$\leq$} edge from parent[draw=none]
          child [grow=right] {node (q) {$k \cdot c_{n} \cdot \sigma $} edge from parent[draw=none]
            child [grow=up] {node (r) {$\vdots$} edge from parent[draw=none]
              child [grow=up] {node (s) {$4 \cdot c_2 \cdot \sigma^{k/4}$} edge from parent[draw=none]
                child [grow=up] {node (t) {$2 \cdot c_1  \cdot \sigma^{k/2}$} edge from parent[draw=none]
                }
              }
            }
            child [grow=down] {node (v) {$\mathcal{O}(k \cdot \sigma^{k/2})$}edge from parent[draw=none]}
          }
        }
      }
    }
  }
};
%\path (z) -- (u) node [midway] {=};

\path (a) -- (j) node [midway] {+};
\path (b) -- (g) node [midway] {+};
\path (k) -- (g) node [midway] {+};
\path (k) -- (l) node [midway] {+};

%\path (o) -- (e) node (x) [midway] {$\cdots$};
\path (d) -- (p) node (x) [midway] {$\cdots$};

\path (q) -- (r) node [midway] {+};
\path (s) -- (r) node [midway] {+};
\path (s) -- (t) node [midway] {+};
\path (s) -- (l) node [midway] {$\leq$};
\path (j) -- (t) node [midway] {$\leq$};

\path (q) -- (v) node [midway] {=};
%\path (e) -- (x) node [midway] {+};
%\path (o) -- (x) node [midway] {+};
\path (p) -- (x) node [midway] {+};
\path (d) -- (x) node [midway] {+};

\path (r) -- (c) node [midway] {$\cdots$};
\end{tikzpicture}
 \caption{Illustration of the work required to %perform prefix-suffix combinations at
 combine all alive prefix-suffix pairs for all recursive calls of \autoref{alg:dc}. If we exclude the root, the sum for the remaining nodes is $\mathcal{O}(k \cdot \sigma^{k/2})$.}\label{fig:dc-combination}
\end{figure}

%\section{Additional figures and tables}
\section{Additional figures}

\begin{figure}[H]
  \centering
  \includegraphics[width=0.5\linewidth]{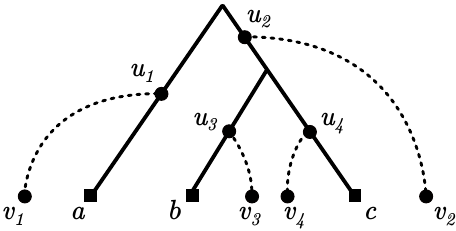}
  \caption{A toy reference tree (solid lines) with three leaves $a,b,c$ (filled squares), which correspond to the (observed) reference sequences, for which a multiple alignment is given as input. To this reference tree, we add the nodes in $V'=\{u_1,u_2,u_3,u_4,v_1,v_2,v_3,v_4\}$ (filled circles), representing unobserved relatives of $a,b,c$. Some of these nodes represent ancestral sequences ($u_1,u_2,u_3,u_4$), while some others represent ``cousin'' sequences ($v_1,v_2,v_3,v_4$) related to the reference tree via newly added edges (dashed lines). For each of these nodes, we can obtain probability matrices $\{P^{u_i}\}$, $\{P^{v_i}\}$, on the basis of the input alignment and of the reference tree. These matrices are the input of the phylo-$k$-mer computation problem.}\label{fig:extended_tree}
    % RIGHT FONT: jsMath-cmmi10 or cmti10 FOR FUTURE REFERENCE
    % Actually I know what's wrong: numbers are not in italic in latex.  Not that important but if you can fix this then I promise I won't ask for anything more.
    % The font we are using is CMTI something, which I believe stands for C Medium T Italic. So there is no regular version (I cant find it)
    % OK
\end{figure}

\begin{figure}[ht]
\centering
   \includegraphics[width=\linewidth]{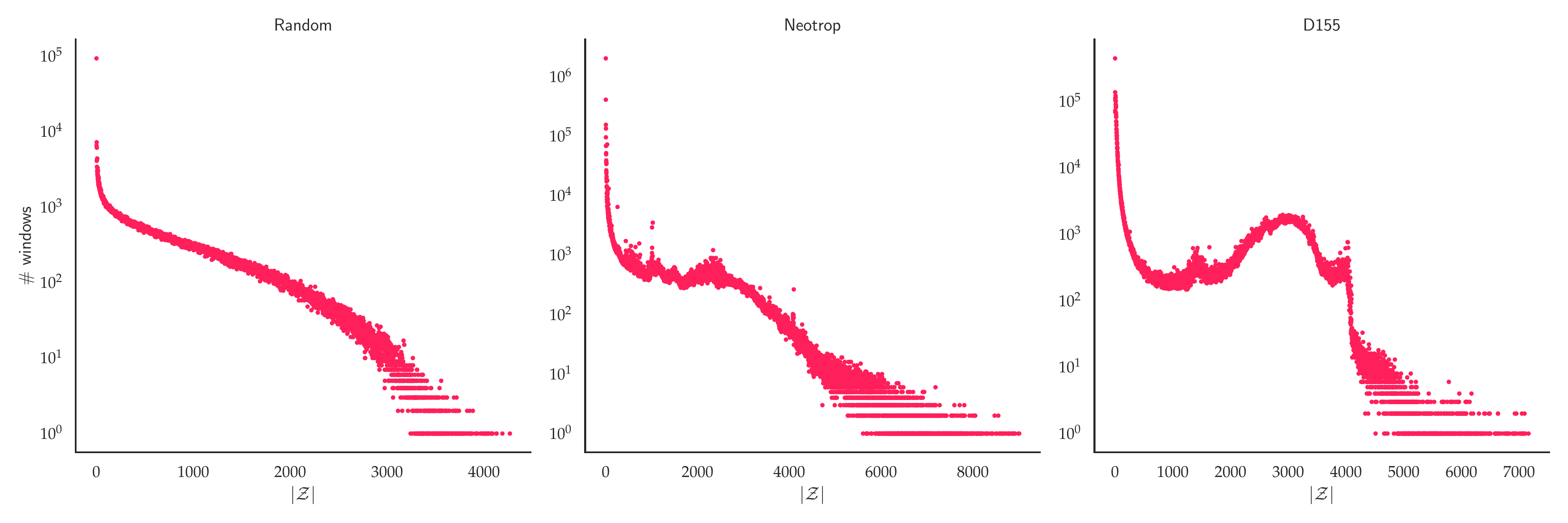}
   \caption{
   %The numbers of windows (y-axis) that have got certain number $\mathcal{|Z|}$ of alive $k$-mers (x-axis) for three datasets used in experiments.
   Number of windows (y-axis) that have $\mathcal{|Z|}$ alive $k$-mers (x-axis) for the three datasets used in experiments.}\label{fig:all-z-hists}
   % plots the number of windows contributing to each value of 
\end{figure}

\begin{figure}[H]
\centering
   \includegraphics[width=0.7\linewidth]{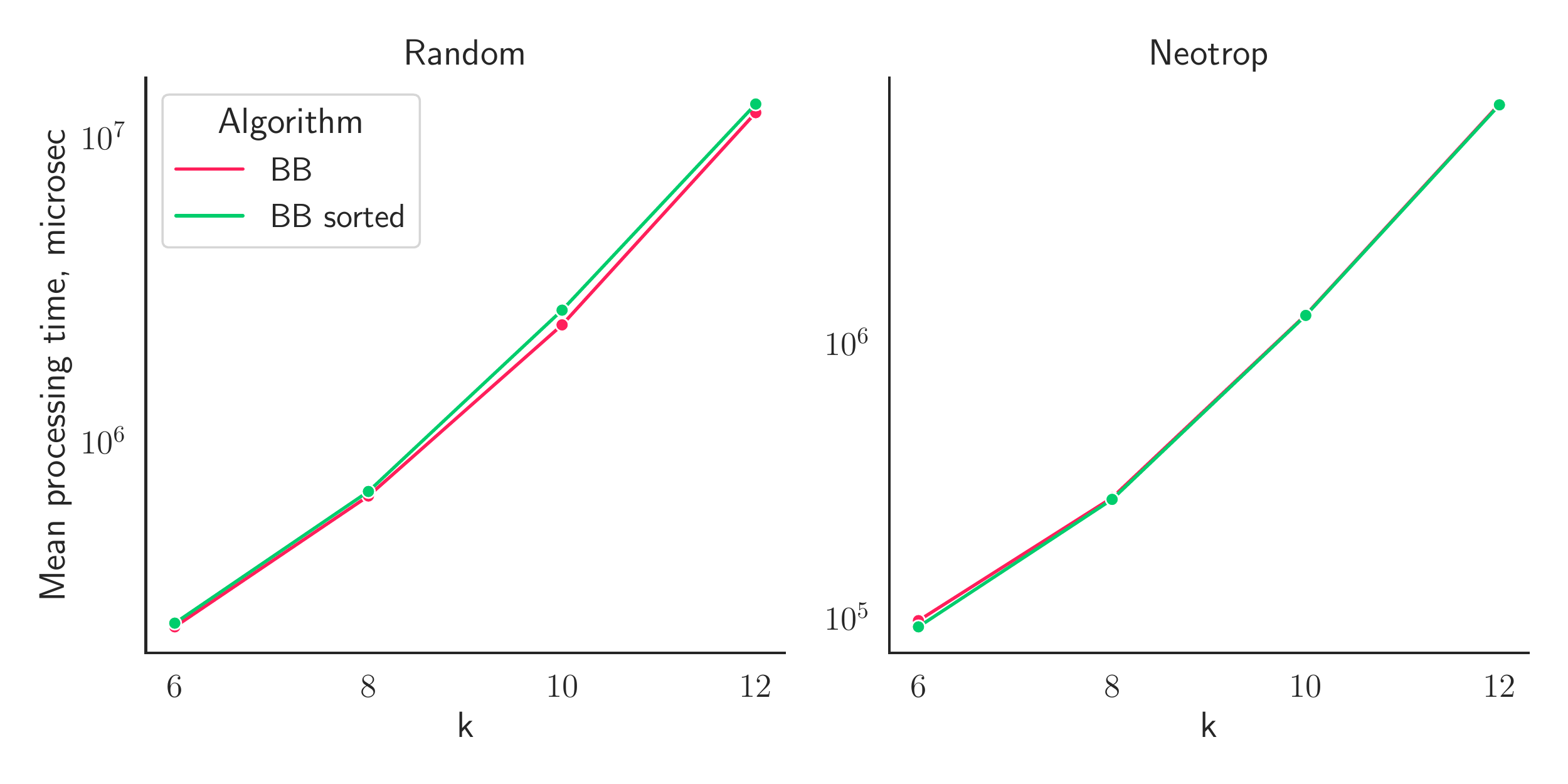}
   \caption{\fab{Total t}ime (in microseconds, log-scale) to process a window for different values of $k$ by branch-and-bound on original data (BB) and on windows with sorted columns (BB sorted). %Sorting time is not included in the measurements.}
   }\label{fig:bb-sorted}
   %\comfab{Note the changes in the legend. Also, why does the $y$ axis say ``mean'' if this is the total?}
\end{figure}

\section{Memory consumption}\label{appendix:memory}

To evaluate and compare the memory requirements of the presented algorithms, we measured the peak RAM consumption as follows. For every algorithm, we ran an individual process that performed reading input data for a given dataset (or simulating input data) and phylo-$k$-mer computation (for $k=10$ and the default threshold value) for all windows of all input matrices. We measured the maximal resident size reached in the process's lifetime using GNU \texttt{time}. %(i.e., \texttt{/usr/bin/time} on most Linux systems).
We ran every process three times to average the measurements.

\begin{center}
\begin{table}[b]
    \centering
    \begin{tabular}{c c c c}\toprule
               & {BB}     & {DC}    & {DCCW}   \\ \midrule
    {\emph{Random}}   & \textbf{84.00}   & 84.18   & 84.14    \\
    {\emph{neotrop}}  & \textbf{1350.60}  & 1350.70 & 1350.68  \\
    {\emph{D155}}     & \textbf{1353.73}  & 1353.76 & 1353.79  \\
    \bottomrule \\
\end{tabular}
    
\caption{Peak memory consumption (maximum resident set size in Megabytes) of the process performing the computation of phylo-$k$-mers for all input matrices of a given dataset using each of the presented algorithms. Every value is the average of measurements for three independent runs. Values in bold represent the minimal RAM consumption achieved among all algorithms for each dataset.}\label{tbl:memory}
\end{table}
\end{center}

The resulting values (shown in \autoref{tbl:memory}) %for the same dataset 
are virtually identical for different algorithms. While BB showed the best numbers in all experiments, the degradation of DC's and DCCW's memory consumption is under $0.01\%$ compared to BB. This can be explained by the fact that, %\fab{\sout{for DC and DCCW, the additional memory consumption per window}
for all algorithms, memory consumption is dominated by the size of the input and output. For the input, we keep all matrices $P^u$ in memory to optimize the overall computation for speed regardless of which algorithm is used. The output is accumulated across multiple windows of $P^u$, as it is required by \autoref{tpkc}.

\end{document}